\newtheorem{theorem}{Theorem}
\newtheorem{fact}{Fact}
\newtheorem{corollary}{Corollary}
\newtheorem{definition}{Definition}
\newtheorem{example}{Example}
\newtheorem{lemma}{Lemma}
\newtheorem{remark}{Remark}
\newenvironment{proof}[1][Proof]{\emph{#1.} }{\  \hfill $\square $ \vspace{5 pt}}
\tikzset{myptr/.style={decoration={markings,mark=at position 1 with %
       {\arrow[scale=2,>=stealth]{>}}},postaction={decorate}}}
\newcommand*\samethanks[1][\value{footnote}]{\footnotemark[#1]}
\DeclareFontFamily{T1}{calligra}{}
\DeclareFontShape{T1}{calligra}{m}{n}{<->s*[1.44]callig15}{}
\DeclareMathAlphabet\mathcalligra   {T1}{calligra} {m} {n}
\newcommand{\pablo}[1]{  \ifthenelse{\boolean{showcomments}}
{\textcolor{green!50!black}{(T: #1)}}{}}
\newcommand{\marcelo}[1]{\ifthenelse{\boolean{showcomments}}
{\textcolor{red}{(M: #1)}}{}}
\newcommand{\agustin}[1]{  \ifthenelse{\boolean{showcomments}}
{\textcolor{blue!50!black}{(T: #1)}}{}}
\begin{document}

\title{Obvious manipulations of tops-only voting rules%
\thanks{%
We thank Jordi Massó for his very detailed comments. We acknowledge financial support
from UNSL through grants 032016, 030120, and 030320, from Consejo Nacional
de Investigaciones Cient\'{\i}ficas y T\'{e}cnicas (CONICET) through grant
PIP 112-200801-00655, and from Agencia Nacional de Promoción Cient\'ifica y Tecnológica through grant PICT 2017-2355.}}


\author{R. Pablo Arribillaga\thanks{
Instituto de Matem\'{a}tica Aplicada San Luis, Universidad Nacional de San
Luis and CONICET, San Luis, Argentina, and RedNIE. Emails: \href{mailto:rarribi@unsl.edu.ar}{rarribi@unsl.edu.ar} 
and \href{mailto:abonifacio@unsl.edu.ar}{abonifacio@unsl.edu.ar} 
} \and Agustín G. Bonifacio\samethanks[2] 
}

\date{\today}

\maketitle

\begin{abstract}
In a voting problem with a finite set of alternatives to choose from, we study the manipulation of tops-only rules. Since 
all non-dictatorial (onto) voting rules are manipulable when there are more than two alternatives and all preferences are allowed, we look for rules in which manipulations are not obvious. First, we show that a rule does not have obvious manipulations if and only if when an agent vetoes an alternative  it can do so with any preference that does not have such alternative in the top.
Second, we focus on two classes of tops-only rules: (i) (generalized) median voter schemes, and (ii) voting by committees. For each class, we identify which rules do not have obvious manipulations on the universal domain of preferences. 


\bigskip

\noindent \emph{JEL classification:} D71, D72. \bigskip

\noindent \emph{Keywords:} obvious manipulations, tops-onlyness, 
(generalized) median voting schemes, voting by committees, voting by quota.  

\end{abstract}

\section{Introduction}

Voting rules are systematic procedures that allow a group of agents to select an alternative, among
many, according to their preferences. Within desirable properties a voting rule may satisfy, the concept of strategy-proofness has played a central
role for studying the 
strategic behavior of the agents. A voting rule is strategy-proof if it is always in the best interest of the agents
to reveal their true preferences. Unfortunately, Gibbard-Satterthwaite’s celebrated theorem states that, outside of a dictatorship, there is no strategy-proof (onto) voting rule when
more than two alternatives and all possible preferences over alternatives are considered \citep{gibbard1973manipulation, satterthwaite1975strategy}. To circumvent  this impossibility
result, two main approaches have been taken. The first approach restricts the domain of preferences that agents can have
over alternatives \citep[see][and references therein]{barbera2011strategyproof}. The second approach considers weakenings of strategy-proofness, and it has been an active field of research in recent years. The idea  underlying this approach is that even though  manipulations are pervasive,  agents may not realize they can manipulate a rule  because they lack information about others' behavior or they are cognitively limited. 

\cite{troyan2020obvious} introduce the concept of obvious manipulation in the context
of market design. They assume that an agent knows the possible outcomes of the mechanism
conditional on his own declaration of preferences, and define a deviation from the truth
to be an obvious manipulation if  the best possible outcome under the deviation
is strictly better than the best possible outcome under truth-telling, or the worst possible
outcome under the deviation is strictly better than the worst possible outcome under
truth-telling. A mechanism that does not allow any obvious manipulation is called not obviously 
manipulable.

In this paper we study (not) obvious manipulation of voting rules when a finite set of alternatives is involved. We focus on tops-only rules: rules that only consider agents' top alternatives in order to select a social choice. Obvious manipulations of non-tops-only rules have been thoroughly studied elsewhere \citep[see][]{aziz2021obvious}. The importance of studying  obvious manipulations of  tops-only rules is three-fold. Firstly, because of their simplicity,  tops-only voting rules are important rules on their own right and are both useful in practice and extensively studied in the literature. Secondly, assume we have a class of strategy-proof rules defined on a restricted domain of preferences and we want to study whether those rules are obviously manipulable in the universal domain of preferences. Then, it is natural to require tops-onlyness since, under mild assumptions, strategy-proofness implies tops-onlyness \citep[see, for example,][]{barbera1991voting,chatterji2011tops, weymark2008strategy}. Thirdly, obvious manipulations  allow us to discriminate among  different tops-only rules, while other recently studied weakenings of strategy-proofness are incapable to do this because, under tops-onlyness, those concepts become equivalent to strategy-proofness \citep[see, for example,][]{arribillaga2022regret}.

Our main result gives a characterization of not obviously manipulable rules in terms of  veto power of the agents. An agent vetoes an alternative if there is a preference report of the agent that forces the rule to never select such alternative. The veto is strong if the report of \emph{any} preference with top different from the alternative forces the rule to never select it.  Theorem \ref{Main Theorem} states that, within tops-only rules, not obvious manipulation is equivalent to each veto being a strong veto.

Next, we apply our main result to study well-known classes of tops-only voting rules when they are considered on the universal domain of preferences: (generalized) median voter schemes and voting by committees. 

First, consider a problem  where the set of alternatives $X$ is an ordered set. Without loss of generality, assume $X=\{a,a+1,a+2,\ldots, b\}\subseteq \mathbb{N}$ and $%
b=a+(m-1).$ For this problem, \cite{moulin1980strategy} characterizes all strategy-proof and tops-only (onto) rules on the  domain of single-peaked preferences\footnote{An agent’s preference is single-peaked if there is a top alternative that is strictly preferred to all other alternatives and at each of the two sides of the top alternative the preference is monotonic, increasing in the left, and decreasing in the right.} as the class of all generalized median voter schemes. \cite{moulin1980strategy} also characterizes the subclass of
median voter schemes as the set of all strategy-proof, tops-only, and anonymous (onto) rules on the domain of single-peaked preferences. A median voter scheme can be identified
with a vector $\alpha = (\alpha_1, \ldots, \alpha_{n-1})$ of $n - 1$ numbers in $X$, where $n$ is the cardinality of
the set of agents $N$ and $\alpha_1\leq \alpha_2\leq \ldots\leq \alpha_{n-1}$. Then, for each preference profile, the median
voter scheme identified with $\alpha$ selects the alternative that is the median among the $n$
top alternatives of the agents and the $n-1$ fixed numbers $\alpha_1, \ldots, \alpha_{n-1}$. 
Since $2n - 1$ is an odd number, this median always exists and belongs to $X$. Generalized
median voter schemes constitute non-anonymous extensions of median voter schemes. 

When the designer cannot guarantee  that the domain restriction (single-peakedness) is met and the full domain of preferences has to be considered, then strategy-proofness no longer holds. For this reason,  it is important to identify which rules within these families obey   the less demanding property of non-obvious manipulability. In Theorem \ref{teorema median}, we show that a median voter scheme is not obviously manipulable if and only if  $\alpha_1 \in \{a, a+1\}$ 
 and $\alpha_{n-1}\in \{b-1,b\}.$ A similar condition applied to the extremal fixed ballots (for each agent) in the monotonic family of fixed ballots associated with  generalized median voter schemes characterizes those that are   not obviously manipulable (Theorem \ref{teo general median}).

Now, consider a problem in which  agents have to choose a\textit{\ subset of objects} from a set $K$ (with $\left\vert K\right\vert \geq 2$)$.$ Then, in this case, $X=2^{K}$ and elements of $X$ are subsets of $K$. A generic element of $K$ is
denoted by $k$.  As an example, think of objects as candidates
to become new members of a society that have to be elected by the current members of the society. \cite{barbera1991voting} characterize, on the restricted domain of separable preferences,\footnote{An agent’s preference is separable on the family of all subsets of objects if the division between good objects and bad objects guides the ordering of all subsets in the sense that adding a good object
to any set leads to a better set, while adding a bad object leads to a worse set.} the family of all strategy-proof (onto) rules as the class of voting by
committees. Following \cite{barbera1991voting}, a voting
by committees is defined by specifying for each object $k \in K$ a monotonic family
of winning coalitions $\mathcal{W}_k$ (a committee). Then, the choice of the subset of objects
made by a voting by committees at a preference profile is done object-by-object
as follows. Fix a voting by committees $\mathcal{W}=\{\mathcal{W}_k\}_{k \in K}$ and a preference profile,
and consider object $k$. Then, $k$ belongs to the chosen set (the one selected by $\mathcal{W}$ at
the preference profile) if and only if the set of agents whose best subset of objects
contains $k$ belongs to the committee $\mathcal{W}_k$.\footnote{Voting by committees can be seen
as a family of extended majority voting (one for each object $k$), where the two
alternatives at stake are whether or not $k$ belongs to the collectively chosen subset of objects.} Observe that voting by committees are in fact tops-only, so tops-onlyness is implied by strategy-proofness on the domain of separable preferences. 

Again, if the domain restriction (separability, in this case) is not guaranteed to be met, identifying which rules are not obviously manipulable can be helpful. In Theorem \ref{teo committees}, we show that a non-dictatorial voting by committees is not obviously manipulable if and only if no agent is a vetoer. In terms of the committees defining the rule, this is equivalent to say that, for each object: (i) no agent belongs to all the coalitions in the committee, and (ii) no singleton coalition belongs to the committee. When anonymity is added to the picture,  voting by committees  simplifies to voting by quota.  In this case, for each $k \in K$, there is a number $q_k$ such that $\mathcal{W}_k$ is the set of all coalitions with cardinality at least $q_k$. We prove that non-obvious manipulability is equivalent to  each committee having quota between 2 and $n-1$ (Corollary \ref{cor quota}).

The paper of \cite{aziz2021obvious} is the closest to ours and, to the best of our knowledge, is the first one that applies \cite{troyan2020obvious} notion of obvious manipulation to the context of voting. \cite{aziz2021obvious} present a general sufficient condition for
a voting rule to be not obviously manipulable. However, they focus on non-tops-only rules. They show that Condorcet consistent as
well as some other strict scoring rules are not obviously manipulable. Furthermore, for
the class of $k$-approval voting rules, they give necessary and sufficient conditions for obvious
manipulability. Other recent papers that study obvious manipulations, in contexts other than voting, are 
 \cite{ortega2022obvious} and  \cite{psomas2022fair}.
















The rest of the paper is organized as follows. The model and the concept of obvious manipulations are introduced in Section \ref{section prelim}. In Section \ref{section main}, we present the main result of our paper that, under tops-onlyness,  characterizes  non-obvious manipulable voting rules. Section \ref{section applic} deals with applications: in Subsection \ref{subsection median} we study (generalized) median voter schemes, and in Subsection \ref{subsection committees} we study voting by committees. To conclude, some final remarks are gathered in Section \ref{section final}. 

\section{Preliminaries}\label{section prelim}
\subsection{Model}
A set of \textit{agents }$N=\{1,\ldots ,n\}$, with $n\geq 2$, has to choose
an alternative from a finite and given set $X$ (with cardinality $\left\vert
X\right\vert =m\geq 2$ )$.$ Each agent $i\in N$ has a strict \textit{%
preference} $P_{i}$ over $X.$ Denote by $t(P_{i})$ to the
best alternative according to $P_{i}$, called the 
\textit{top }of $P_{i}$. We denote by $R_{i}$ the weak preference over $X$
associated to $P_{i};$ \textit{i.e., }for all $x,y\in X$, $xR_{i}y$ if and
only if either $x=y$ or $xP_{i}y.$ Let $\mathcal{P}$ be the set of all
strict preferences over $X.$ A (preference)\textit{\ profile} is a $n$-tuple 
$P=(P_{1},\ldots ,P_{n})\in \mathcal{P}^{n},$ an ordered list of $n$
preferences, one for each agent. Given a profile $P$ and an agent $i,$ $%
P_{-i}$ denotes the subprofile in $\mathcal{P}^{n-1}$ obtained by deleting $%
P_{i}$ from $P$.

A \textit{(voting) rule} is a function $f:\mathcal{P}%
^{n}\longrightarrow X$ selecting an alternative for each preference profile in $%
\mathcal{P}^{n}$. We assume throughout that a voting rule is an \textit{onto} function, i.e., for each $x \in X$ there is $P \in \mathcal{P}^n$ such that $f(P)=x.$  A rule $f:\mathcal{P}^{n}\longrightarrow X$ is 
\textit{tops-only }if for all $P,P^{\prime }\in \mathcal{P}^{n}$
such that $t (P_{i})=t (P_{i}^{\prime })$ for all $i\in N$, $%
f(P)=f(P^{\prime })$. In this paper, we will focus on tops-only rules.%


Given a rule $f:\mathcal{P}^{n}\longrightarrow X$ and $P_i\in \mathcal{P}$,  an alternative report $P_i' \in \mathcal{P}$ is a  \textit{(profitable) manipulation of rule $f$ at $P_i$} if  there is  a preference sub-profile  $P_{-i} \in \mathcal{P}^{n-1}$ such that 
\begin{equation*}
f(P_{i}^{\prime },P_{-i})P_{i}f(P_{i},P_{-i}).
\end{equation*}
A rule $f:\mathcal{P}^{n}\longrightarrow X$ is \textit{strategy-proof} on $%
\mathcal{P}^n$ if no agent has a manipulation. 

Other desirable  properties we  look at are the following. A  rule $f:\mathcal{P}%
^{n}\longrightarrow X$ is \textit{efficient }if for each $P\in \mathcal{P}^{n}$,
there is no $x\in X$ such that $xP_{i}f(P)$ for each $i\in N$. A rule $f:\mathcal{P}%
^{n}\longrightarrow X$ is \textit{anonymous }if it is invariant with respect to
the agents' names; namely, for each one-to-one mapping $\sigma :N\longrightarrow
N$ and each $P\in \mathcal{P}^{n}$, $f(P_{1},\ldots,P_{n})=f(P_{\sigma
(1)},\ldots,P_{\sigma (n)})$. A rule $f:\mathcal{P}%
^{n}\longrightarrow X$ is \textit{dictatorial }if there exists $i\in N$ such
that for each $P\in \mathcal{P}^{n}$, $f(P)=t(P_i)$.

The Gibbard-Satterthwaite Theorem states that a (onto) rule $f:%
\mathcal{P}^{n}\longrightarrow X,$ with $m\geq 3$, is
strategy-proof if and only if it is dictatorial \citep{gibbard1973manipulation, satterthwaite1975strategy}.  This negative result justifies the study of less demanding criteria of (lack of) manipulation when rules defined on the universal domain of preferences are considered. One such weakening of strategy-proofness is presented next.

\subsection{Obvious manipulations}

The notion of obvious manipulations has been introduced by \cite{troyan2020obvious} when applied to school choice models and later it has been  studied by \cite{aziz2021obvious} in the context of voting.
They try to describe those manipulations that are easily identifiable by the agents. To do this, it is important to specify how much information each agent has about other agents' preferences.  \cite{troyan2020obvious} assume that each agent has complete ignorance in this respect and, therefore, each agent focuses on the set of outcomes that can be chosen by the rule given its own report. 
 Now, a manipulation is obvious  if  the best possible outcome under the manipulation
is strictly better than the best possible outcome under truth-telling or the worst possible outcome under the manipulation is strictly better than the worst possible outcome under
truth-telling.

Before we present the formal 
 definition, we present some notation. Given a preference $P_{i} \in \mathcal{P}$, the \textit{option set} left open by $P_{i}$ at $f$  is 
\begin{equation*}
O^f(P_{i})=\{f(P_{i},P_{-i})\in X:P_{-i}\in \mathcal{P}^{n-1}\}.\footnote{\cite{barbera1990strategy} were the first to use option sets in the context of preference aggregation.}
\end{equation*}
 Given  $Y \subseteq X,$ denote by $B(P_i,Y)$ to the best alternative in $Y$ according to preference $P_i$, and by $W(P_i,Y)$ to the worst alternative in $Y$ according to preference $P_i$. 

\begin{definition}\label{def OM}{\citep{troyan2020obvious}}
Let $f:\mathcal{P}^n \longrightarrow X$ be a rule, let $P_i \in \mathcal{P},$ and let  $P_{i}^{\prime } \in \mathcal{P}$ be a  
 profitable manipulation  of  
$f$ at $P_i$. A manipulation $P_i'$ is \textbf{obvious} if
\begin{equation}\label{cond worst}
W(P_i,O^f(P_{i}^{\prime })) \ P_{i} \  W(P_i, O^f(P_{i}))
\end{equation}
or 
\begin{equation}\label{cond best}
B(P_i,O^f(P_{i}^{\prime })) \ P_{i} \ B(P_i, O^f(P_{i})).  
\end{equation}
The rule $f$ is \textbf{not obviously manipulable (NOM)} if it does
not admit any obvious manipulation.
\end{definition}

\section{Main theorem}\label{section main}

As previously mentioned, in this paper we focus in obvious manipulations of tops-only rules. Because of their simplicity,  tops-only voting rules  are both useful in practice and extensively studied in the literature. Now, assume we have a class of strategy-proof rules defined on a restricted domain of preferences and we want to study whether those rules are obviously manipulable in the universal domain of preferences. Then, it is natural to require tops-onlyness since, under mild assumptions, strategy-proofness implies tops-onlyness \citep[see, for example,][]{chatterji2011tops, weymark2008strategy}. Furthermore, obvious manipulations  allow us to discriminate among  different tops-only rules, while other recently studied weakenings of strategy-proofness
are incapable to do this because, under tops-onlyness, those concepts become equivalent to strategy-proofness \citep[see, for example,][]{arribillaga2022regret}.\footnote{\cite{arribillaga2022regret} assume that an agent knows a specific type of information about the preferences of the other individuals
conditional on his own declaration of preferences, and focus on manipulations for which the worst possible
outcome under the deviation consistent with the agent's information is strictly better than the outcome under
truth-telling. It can be seen that, under tops-onlyness, the existence of a manipulation of the aforementioned type is equivalent to the existence of a classical manipulation.} Obvious manipulations of non-tops-only rules have been thoroughly studied elsewhere \citep[see][]{aziz2021obvious}. In this section we provide a necessary and sufficient condition for a tops-only rule to be NOM.

In order to obtain our main result, we first need to define when an agent has veto power. An agent vetoes an alternative if there is a preference report of the agent that forces the rule to never select such alternative. Formally,

\begin{definition} Let $f:\mathcal{P}^n \longrightarrow X$ be a rule and let $i \in N,$  $x \in X,$ and $P_i \in \mathcal{P}$. Agent  \textbf{$\boldsymbol{i $ vetoes  $x$ via $P_{i}}$} if $x\notin O^f(P_{i}).$ 
\end{definition}
Denote by $V_{i}$ the set of all alternatives that agent $i$ vetoes via some preference. Given $x\in V_{i}$, let $%
\mathcal{V}_{i}^{x}=\{P_{i} \in \mathcal{P} :$ $i$ vetoes $x$ via $P_{i}\}$ be the set of all
preferences by which $x$ is vetoed by agent $i$. As it is observed by \cite{aziz2021obvious}, if a rule $f$ has no vetoers then it is NOM, because $O^f(P_i)=X$ for each $i \in N$ and each $P_i \in \mathcal{P}$. However, as the next example shows, there are rules that satisfy NOM and admit (many) vetoers.\footnote{For an example of a non-tops-only rule that satisfies NOM and admits vetoers, see Lemma 5 in \cite{aziz2021obvious}.} 

\begin{example}
Let $a \in X$ and consider the \emph{status quo rule at $a$}, $f^a:\mathcal{P}^n \longrightarrow X$, defined as $$f^a(P)=\begin{cases} x & \text{if } t(P_i)=x \text{ for each }i \in N \\
a & \text{otherwise.}\\\end{cases}$$ Observe that, if $P_i \in \mathcal{P}$ is such that $t(P_i)=a$,  then $O^{f^a}(P_i)=\{a\}$ and thus $V_i=X \setminus \{a\}.$ Furthermore, for any $P_i \in \mathcal{P},$ $O^{f^a}(P_i)=\{a, t(P_i)\}$. Therefore,  
$W(P_i, O^{f^a}(P_{i}))=a R_i W(P_i,O^{f^a}(P_{i}^{\prime }))$ 
and
$B(P_i, O^{f^a}(P_{i}))=t(P_i) R_i B(P_i,O^{f^a}(P_{i}^{\prime }))$ for each $P_i' \in \mathcal{P}.$ Hence, $f^a$ is NOM.          
\end{example}

The previous example shows that the condition of a rule having no vetoers is far from being necessary for the rule to be NOM. What turns out to be important is what kind of vetoes are admissible by the rule.  Next, we introduce a particular type  of veto power  that allows us to  state a necessary and sufficient condition for NOM in tops-only rules.      
We say that the veto of an alternative by an agent is strong if the report of \emph{any} preference with top different from the alternative forces the rule to never select it. Formally,

\begin{definition} Let $i \in N$ and  $x \in X$.  Agent \textbf{$\boldsymbol{i$ strongly vetoes $x}$} if $\mathcal{V}_i^x=\{P_i \in \mathcal{P}: t(P_i) \neq x\}.$ 
\end{definition}
Denote by $SV_i$ the set of all alternatives strongly vetoed by agent $i$. Note that $SV_i \subseteq V_i$ for each $i \in N$. 
 Clearly, the sets $V_{i}, SV_i$ and $\mathcal{V}^x_{i}$ depend on $f$ but we omit this reference to ease notation.

\begin{theorem}\label{Main Theorem}
A tops-only rule is NOM  if and only if
every veto is a strong veto, i.e., $SV_i=V_i$ for each agent $i\in N$.\footnote{When $m=2$ (two alternatives), the condition $SV_i=V_i$ is equivalent to the simple condition that agent $i$ does not veto $t(P_i)$ with $P_i$.}
\end{theorem}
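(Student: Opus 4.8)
The plan is to translate everything into the language of option sets indexed by tops and then prove the two implications separately. The key consequence of tops-onlyness is that $O^f(P_i)$ depends only on $t(P_i)$, so I may write $O_i(t)$ for this common set, and whether $i$ vetoes $x$ via $P_i$ likewise depends only on $t(P_i)$. Letting $T_x=\{t\in X:x\notin O_i(t)\}$ be the set of tops through which agent $i$ vetoes $x$, ontoness gives $T_x\neq X$. In this language $x\in V_i$ reads $T_x\neq\emptyset$ and $x\in SV_i$ reads $T_x=X\setminus\{x\}$, so the theorem becomes: $f$ is NOM if and only if $T_x\in\{\emptyset,X\setminus\{x\}\}$ for every $i$ and every $x$.

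For sufficiency, assume every veto is strong. Then for each $x$ either $x\in O_i(t)$ for all $t$, or $x\in O_i(t)$ exactly when $t=x$; writing $N_i$ for the set of never-vetoed alternatives, this collapses to the identity $O_i(t)=N_i\cup\{t\}$, and in particular $t(P_i)\in O^f(P_i)$ always. Fixing $P_i$ and an arbitrary report $P_i'$, since $t(P_i)$ is the global top of $P_i$ and lies in $O^f(P_i)$ we get $B(P_i,O^f(P_i))=t(P_i)$, so \eqref{cond best} can never hold. For the worst element, the global top does not affect a minimum, so $W(P_i,O^f(P_i))=W(P_i,N_i)$ (or $t(P_i)$ when $N_i=\emptyset$), whereas $W(P_i,O^f(P_i'))=W(P_i,N_i\cup\{t(P_i')\})$ is weakly $P_i$-worse because enlarging a set can only lower its minimum; hence $W(P_i,O^f(P_i))\,R_i\,W(P_i,O^f(P_i'))$ and \eqref{cond worst} fails too. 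Therefore no profitable manipulation is obvious.

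For necessity I would argue by contraposition, producing a single obvious manipulation from any non-strong veto. Suppose $T_x\neq\emptyset$ and $T_x\neq X\setminus\{x\}$ for some $i,x$. Using $T_x\neq\emptyset$ pick $z\in T_x$ (so $x\notin O_i(z)$), and using $T_x\neq X\setminus\{x\}$ together with $T_x\neq X$ extract an alternative $y\neq x$ with $y\notin T_x$ (so $x\in O_i(y)$); checking that such a $y$ exists is exactly where the failure of strong veto is used, and splits according to whether $x\in T_x$. Let the true preference $P_i$ have top $y$ and rank $x$ last, and let $P_i'$ be any report with top $z$. Then $x\in O^f(P_i)$ and $x$ is the $P_i$-minimum, so $W(P_i,O^f(P_i))=x$, while $x\notin O^f(P_i')$ forces $W(P_i,O^f(P_i'))\neq x$ and hence $W(P_i,O^f(P_i'))\,P_i\,W(P_i,O^f(P_i))$, that is, \eqref{cond worst} holds. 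It remains to see $P_i'$ is a genuine manipulation: choosing $P_{-i}$ with $f(P_i,P_{-i})=x$ (available since $x\in O^f(P_i)$), the outcome $f(P_i',P_{-i})\in O^f(P_i')$ avoids $x$ and so is strictly $P_i$-preferred to $f(P_i,P_{-i})=x$. Thus $f$ admits an obvious manipulation and is not NOM.

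The step I expect to be the crux is the structural one just flagged: for a general onto tops-only rule the top need not belong to its own option set, so $B(P_i,O^f(P_i))$ need not equal $t(P_i)$, and the entire sufficiency argument rests on recovering $t(P_i)\in O^f(P_i)$ from the strong-veto hypothesis through $O_i(t)=N_i\cup\{t\}$. The only other delicate point is the bookkeeping in the necessity direction guaranteeing an alternative $y\neq x$ outside $T_x$, which is precisely the combined content of $T_x\neq X\setminus\{x\}$ and ontoness.
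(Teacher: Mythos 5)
Your proposal is correct and takes essentially the same route as the paper: your identity $O_i(t)=N_i\cup\{t\}$ is precisely the paper's $O^f(P_i)=(X\setminus V_i)\cup\{t(P_i)\}$ in the sufficiency direction, and your necessity construction (true preference with top $y$ and $x$ ranked last, misreport with vetoing top $z$) is exactly the paper's worst-case obvious manipulation. The only difference is cosmetic: your set $T_x$ absorbs the paper's two cases (a vetoing preference with top $x$ versus a non-vetoing preference with top $\neq x$) into a single existence argument for $y$, with ontoness ruling out $T_x=X$ just as it does in the paper's Case 1.
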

\begin{proof}
Let $f:\mathcal{P}^n \longrightarrow X$ be a tops-only rule.

\noindent ($\Longrightarrow$) Assume there is $i \in N$ such that $V_i \neq SV_i.$ Since $SV_i \subseteq V_i$, $V_i \neq \emptyset$ and there is $x \in X$ such that $x \in V_i \setminus SV_i$. Then,  
\begin{equation}\label{top no es x}
\mathcal{V}_i^x\neq \{P_i \in \mathcal{P}: t(P_i)\neq x\}.    
\end{equation}
By \eqref{top no es x}, there are two cases to consider:
\begin{enumerate}
    \item[$\boldsymbol{1}.$] \textbf{There is $\boldsymbol{P'_i \in \mathcal{V}_i^x$ is such that $t(P'_i)=x}$}. By ontoness, there is $(P_i, P_{-i})\in\mathcal{P}^n$ such that $f(P_i, P_{-i})=x.$ Furthermore, $t(P_i)\neq x$, since otherwise $t(P_i)=x$ and tops-onlyness would imply $P_i' \notin \mathcal{V}_i^x$.  Let $\overline{P}_{i} \in \mathcal{P}$ be such that $t(\overline{P}_{i})=t
(P_{i})$ and $b(\overline{P}_{i})=x$.  By tops-onlyness, $f(\overline{P}_{i},P_{-i})=x.$ Since 
$P'_i \in \mathcal{V}_i^x$, $x\notin
O^f(P_{i}^{\prime }).$ Then, $f(P_{i}^{\prime },P_{-i})\neq x$ and therefore $%
f(P_{i}^{\prime },P_{-i})\overline{P}_{i}x=f(\overline{P}_{i},P_{-i})$, implying that $P_i'$ is a profitable manipulation of $f$ at $P_i$. Furthermore, as $x\notin O^f(P_{i}^{\prime })$, 
\begin{equation*}
W(\overline{P}_i, O^f(P_{i}^{\prime })) \ \overline{P}_{i} \ x= W(\overline{P}_i, O^f(\overline{P}_{i})).
\end{equation*}%
Thus, $P_{i}^{\prime }$ is an obvious manipulation of $f$. 
\item[$\boldsymbol{2}.$] \textbf{There is $\boldsymbol{P_{i} \in \mathcal{P}$ such
that $t (P_{i})\neq x$ and $P_{i}\notin \mathcal{V}_{i}^{x}}$}.
As $P_{i}\notin \mathcal{V}_{i}^{x}$, there
is $P_{-i} \in \mathcal{P}^{n-1}$ such that $f(P_i, P_{-i})=x.$ Let $\overline{P}_{i} \in \mathcal{P}$ be such that $t(\overline{P}_{i})=t
(P_{i})$ and $b(\overline{P}_{i})=x$.  By tops-onlyness, $f(\overline{P}_{i},P_{-i})=x.$ Since $%
x\in V_{i},$ there is $P_{i}^{\prime } \in \mathcal{P}$ such that 
 $P_i' \in \mathcal{V}_i^x,$ and the proof follows as in the previous case.

\end{enumerate}

\noindent ($\Longleftarrow$) Let $i \in N$ be such that $V_i=SV_i.$ We will prove that  agent $i$ has no obvious manipulations. If $V_{i}=\emptyset ,$ the proof is
trivial. Assume that $V_{i}\neq \emptyset.$ For each $P_{i} \in \mathcal{P}$, as $V_i=SV_i,$ 
$$O^f(P_i)=(X\setminus V_i) \cup \{t(P_i)\}.$$
Then, for each $P'_i\in \mathcal{P}$,
$$B(P_i, O^f(P_i))=t(P_i)\ R_i\ B(P_i, O^f(P_i'))$$
and
$$W(P_i, O^f(P_i))=W(P_i, (X\setminus V_i) \cup \{t(P_i)\}) \ R_i\ W(P_i, (X\setminus V_i) \cup \{t(P_i')\}) =W(P_i, O^f(P_i')).$$
Hence, $i$ does not have an obvious
manipulation.\end{proof}


 Corollary \ref{cor main eff} shows that under efficiency and tops-onlyness NOM implies a very limited veto power: at most one agent can veto some alternatives or only one alternative can be vetoed by some agents. 

\begin{corollary}\label{cor main eff}
An efficient and tops-only rule is NOM if and only
if some of the following statements hold:
\begin{enumerate}[(i)]
    \item  There is at most one $i\in N$ such that $V_{i}\neq \emptyset $ and, moreover, $SV_i=V_i$.
    \item There is $y\in X$ such that  $SV_i=V_i \subseteq \{y\}$, for each $i \in N$.
    
\end{enumerate}
\end{corollary}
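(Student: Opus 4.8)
The plan is to derive the corollary from Theorem~\ref{Main Theorem} together with efficiency. Since both conditions (i) and (ii) explicitly include the requirement $SV_i=V_i$ for every $i\in N$, the ``if'' direction is immediate and uses no efficiency: if (i) or (ii) holds then $SV_i=V_i$ for each $i\in N$ — recall $SV_i\subseteq V_i$ always, so any agent with $V_i=\emptyset$ satisfies this trivially — and hence $f$ is NOM by Theorem~\ref{Main Theorem}. All the real work is in the ``only if'' direction.

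For the converse, assume $f$ is efficient, tops-only, and NOM. By Theorem~\ref{Main Theorem} we already have $SV_i=V_i$ for every $i\in N$, so every veto is a strong veto; it remains to show that efficiency forces the dichotomy between (i) and (ii). Let $A=\{i\in N: V_i\neq\emptyset\}$. If $|A|\le 1$ then (i) holds and we are done, so I assume $|A|\ge 2$ and aim to establish (ii).

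The key step — and what I expect to be the main obstacle — is the following claim: \emph{if $i,j\in N$ with $i\neq j$, $x\in V_i$ and $x'\in V_j$, then $x=x'$.} I would prove it by contradiction using a single carefully chosen profile. Suppose $x\neq x'$. Pick $P_i\in\mathcal{P}$ with $t(P_i)=x'$ and with $x$ ranked just below $x'$ (so $x$ beats every alternative other than $x'$), and pick $P_k\in\mathcal{P}$ with $t(P_k)=x$ for every $k\neq i$ (in particular $t(P_j)=x$). Because $SV_i=V_i$, agent $i$ \emph{strongly} vetoes $x$, and since $t(P_i)=x'\neq x$ we get $f(P)\neq x$; symmetrically, $j$ strongly vetoes $x'$ and $t(P_j)=x\neq x'$ gives $f(P)\neq x'$. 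Hence $f(P)\in X\setminus\{x,x'\}$. But at this profile $x$ Pareto dominates every alternative in $X\setminus\{x,x'\}$: for each $k\neq i$ we have $x=t(P_k)$, so $x\,P_k\,f(P)$, while for agent $i$, $x$ beats everything except $x'$, so $x\,P_i\,f(P)$ since $f(P)\neq x'$. This contradicts efficiency (and when $m=2$ the set $X\setminus\{x,x'\}$ is empty, so $f(P)$ cannot even be defined, a contradiction as well). Thus $x=x'$.

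Finally I would assemble (ii) from the claim. Fix distinct $i,j\in A$ and some $y\in V_j$; applying the claim to each element of $V_i$ against $y$ shows $V_i=\{y\}$, and then, using $y\in V_i$, applying the claim to each element of $V_j$ against $y$ gives $V_j=\{y\}$. For any further vetoer $k\in A$, pairing $k$ with $i$ yields $V_k=V_i=\{y\}$, so the vetoed alternative is the same common $y$ for all of $A$; agents outside $A$ satisfy $V_k=\emptyset\subseteq\{y\}$. Combined with $SV_i=V_i$ for all $i$ (from Theorem~\ref{Main Theorem}), this is precisely condition (ii). The profile construction in the claim is the only delicate point; everything else is bookkeeping over the cases $|A|\le 1$ and $|A|\ge 2$.
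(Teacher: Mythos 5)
Your proof is correct and takes essentially the same approach as the paper: both directions reduce to Theorem \ref{Main Theorem}, and the only substantive step rests on the same two-alternative profile (agent $i$ with top $x'$ and $x$ ranked second, all other agents with top $x$), with efficiency confining the outcome to $\{x,x'\}$. The difference is purely presentational -- the paper argues by contraposition (efficiency forces $f(P)\in\{x,y\}$, so one of the two vetoes fails to be strong), whereas you argue directly (NOM makes both vetoes strong, so $f(P)\notin\{x,x'\}$ and $x$ Pareto-dominates $f(P)$, contradicting efficiency) and spell out the case bookkeeping that the paper leaves implicit.
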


\noindent \begin{proof}
Let $f:\mathcal{P}^n \longrightarrow X$ be an efficient and tops-only rule. 

\noindent ($\Longrightarrow$) Assume both conditions  (i) and (ii) do not hold. Then, there are distinct $i,j\in N$ and distinct $x,y\in X$ such
that $x\in V_{i}$ and $y\in V_{j}.$ Now let $P \in \mathcal{P}^n$ be such that  $P_{i}:y,x,\ldots$ and  $P_{k}:x,y,\ldots$ for each $k \in N \setminus \{i\}$. By efficiency, $f(P)\in \{x,y\}.$ Therefore, $P_{i}\notin \mathcal{V}_{i}^{x}$ or $P_{j}\notin \mathcal{V}_{i}^{y}.$ So, by Theorem \ref{Main Theorem}, $f$ is not  NOM.

\noindent ($\Longleftarrow$) By Theorem \ref{Main Theorem} it is clear that either condition is  sufficient for $f$ to be  NOM. 
\end{proof}

Corollary \ref{cor main anon and eff}
states that, under efficiency and anonymity, non-obvious manipulability is equivalent to having at most one alternative vetoed and that, if there is one such alternative, the veto is unanimous.

\begin{corollary}\label{cor main anon and eff}
An efficient, anonymous and tops-only rule is
NOM if and only if  either $V_{i}=\emptyset$ for each $i \in N$ or there is $y \in X$
such that $SV_i=V_i=\{y\}$ for each $i \in N$. 
\end{corollary}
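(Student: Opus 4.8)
The plan is to piggyback on Corollary~\ref{cor main eff}, whose two conditions already characterize NOM under efficiency and tops-onlyness, and to show that anonymity collapses both of them into the stated dichotomy. The only new ingredient anonymity contributes is that the veto structure ceases to depend on the identity of the agent; once that is in hand, the statement is a direct reduction to Corollary~\ref{cor main eff}.

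First I would establish the key observation: under anonymity, $V_i$, $SV_i$, and $\mathcal{V}_i^x$ are the same for every $i \in N$. Fix $i,j \in N$ and let $\sigma$ be the transposition interchanging $i$ and $j$. For any $P_i \in \mathcal{P}$ and any $P_{-i} \in \mathcal{P}^{n-1}$, anonymity gives $f(P_i,P_{-i})=f(P^\sigma)$, where in the permuted profile $P^\sigma$ it is agent $j$ who reports $P_i$; as $P_{-i}$ ranges over all of $\mathcal{P}^{n-1}$, the relabeled subprofiles range over all of $\mathcal{P}^{n-1}$ as well, so the option set left open by the preference $P_i$ is literally the same whether $P_i$ is reported by $i$ or by $j$. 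Hence $x \notin O^f(P_i)$ for agent $i$ if and only if the same holds for agent $j$, and therefore the veto sets coincide across agents. Write $V$, $SV$, $\mathcal{V}^x$ for these common sets.

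With this in place, both directions are short. For sufficiency ($\Longleftarrow$): if $V_i=\emptyset$ for every $i$, the rule has no vetoers and is NOM; and if there is $y$ with $SV_i=V_i=\{y\}$ for every $i$, then condition (ii) of Corollary~\ref{cor main eff} holds, so $f$ is NOM. For necessity ($\Longrightarrow$), assume $f$ is NOM and invoke Corollary~\ref{cor main eff}, so (i) or (ii) holds. If (i) holds, at most one agent has a nonempty veto set; but by the key observation all veto sets coincide, and since $n\geq 2$ this forces $V_i=\emptyset$ for every $i$. If (ii) holds, there is $y$ with $SV_i=V_i\subseteq\{y\}$ for all $i$, so the common set $V$ is either $\emptyset$ (the first alternative of the dichotomy) or exactly $\{y\}$ with $SV=V=\{y\}$ (the second alternative); in either case the conclusion follows.

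The only delicate point is the key observation, and even there the difficulty is merely bookkeeping: one must verify that relabeling the remaining $n-1$ agents via $\sigma$ sweeps out exactly the same family of subprofiles, so that the two option sets are genuinely equal rather than merely related by inclusion. Everything else is a mechanical case split driven by Corollary~\ref{cor main eff} together with the fact, guaranteed by $n\geq 2$, that ``at most one agent vetoes'' combined with ``all agents veto the same set'' leaves only the empty veto set available.
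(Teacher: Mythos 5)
Your proof is correct and follows essentially the same route as the paper, whose own proof is the one-line remark that the result ``follows from Corollary~\ref{cor main eff} and anonymity.'' Your key observation---that anonymity makes the option set $O^f(P_i)$ independent of which agent reports $P_i$, hence $V_i$, $SV_i$, and $\mathcal{V}_i^x$ coincide across agents---is exactly the content the paper leaves implicit, and your case split over conditions (i) and (ii) of Corollary~\ref{cor main eff} correctly fills in the remaining details.
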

\begin{proof}
It follows from Corollary \ref{cor main eff} and anonymity. 
\end{proof}
\bigskip


\section{Applications}\label{section applic}

In this section, we apply Theorem \ref{Main Theorem} to study two classes of tops-only  voting rules in two separate (but related) voting problems. Our results allow us to discriminate those rules in each class that are non-obviously manipulable in the universal domain of preferences.

In the first problem, presented in subsection \ref{subsection median}, alternatives are endowed with a linear order structure. When preferences are single peaked over that order, the family of (generalized) median voting schemes encompass all tops-only and strategy-proof (onto) rules. In the second problem, presented in subsection \ref{subsection committees}, alternatives consist of \emph{subsets} of objects chosen from a fixed finite set. When preferences are separable, the class of voting by committees encompass all strategy-proof (onto) rules.



\subsection{Median Voter Schemes}\label{subsection median}

In this subsection assume that $X$ is an ordered set. Without loss of generality, assume $X=\{a,a+1,a+2,\ldots, b\}\subseteq \mathbb{N}$ and $%
b=a+(m-1).$ A preference $P_{i}\in \mathcal{P}$ is \emph{single-peaked} on $X$
if for all $x,y\in X$ such that $x< y<t (P_{i})$ or $t
(P_{i})<y< x$, we have $t (P_{i})P_{i}yP_{i}x$.
We denote the domain of all single-peaked preferences on $X$ by $%
\mathcal{SP}$. Note that $\mathcal{SP} \subsetneq \mathcal{P}$. 

\cite{moulin1980strategy} characterizes the family of strategy-proof and tops-only (onto) rules on the domain
of single-peaked preferences. This family contains many non-dictatorial
rules. For example, when $n$ is odd, consider the rule $f:\mathcal{P}^{n}\longrightarrow X$  that selects, for each preference profile $%
P=(P_{1},\ldots,P_{n})\in \mathcal{P}^{n}$, the median among the top
alternatives of the $n$ agents; namely, $f(P)=med\{t (P_{1}),\ldots,t
(P_{n})\}$.\footnote{%
Given a set of real numbers $\{x_{1},\ldots,x_{K}\}$, where $K$ is odd, define
its \textit{median} as $med\{x_{1},\ldots,x_{K}\}=y$, where $y$ is such that $%
|\{1\leq k\leq K :  x_{k}\leq y\}|\geq \frac{K}{2}$ and $|\{1\leq k\leq
K :  x_{k}\geq y\}|\geq \frac{K}{2}$. Since $K$ is odd the median is unique
and belongs to the set $\{x_{1},\ldots ,x_{K}\}$.\smallskip} This rule is anonymous, efficient, tops-only, and strategy-proof on $\mathcal{%
SP}$. All other rules in the family given by \cite{moulin1980strategy} are extensions of $f$. 
Following \cite{moulin1980strategy}, and before presenting the general result, we first
introduce the anonymous subclass 
 and characterize those rules which  are NOM.
After that,  we present the general class of all strategy-proof and tops-only rules on $\mathcal{SP}^{n}$
and characterize those that are NOM when operating on  domain $%
\mathcal{P}^{n}$.

\subsubsection{Anonymity}
 A rule $f:%
\mathcal{P}^{n}\longrightarrow X$ is a \textit{median voter scheme} if there
is a vector  $\alpha=(\alpha_{1},\ldots,\alpha_{n-1})\in X^{n-1}$ of $n-1$ fixed ballots such that, for each%
\textit{\ }$P\in \mathcal{P}^{n}$\textit{,}%
\begin{equation*}
f(P)=med\{t (P_{1}),\ldots,t (P_{n}),\alpha_{1},\ldots,\alpha_{n-1}\}.  \label{MVS}
\end{equation*}%
Without loss of generality, throughout the paper we assume that $\alpha_{1}\leq \ldots\leq \alpha_{n-1}$. When we want to stress the dependence of the scheme on the vector $\alpha$ of fixed ballots, we write $f^\alpha.$ Also, to simplify notation, we use $O^{\alpha} (P_{i})$ instead of $O^{f^\alpha} (P_{i})$. The characterization of median voter schemes is as follows: 
\begin{fact}{\citep{moulin1980strategy}}
A (onto) rule $f:\mathcal{SP}^n \longrightarrow X$ is strategy-proof, tops-only, and anonymous if and only if it is a median voter scheme.\footnote{The definitions of strategy-proofness, tops-onlyness and anonymity on an arbitrary subdomain $\mathcal{U}^n\subsetneq \mathcal{P}^n$ are straightforward adaptations of the definitions given for the universal domain.} 
\end{fact}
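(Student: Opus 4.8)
The plan is to prove the two implications separately, with the reverse implication being by far the more substantial one. For the ``if'' direction, I would fix a vector $\alpha=(\alpha_1,\ldots,\alpha_{n-1})$ with $\alpha_1\le\cdots\le\alpha_{n-1}$ and check the four properties of $f^\alpha(P)=med\{t(P_1),\ldots,t(P_n),\alpha_1,\ldots,\alpha_{n-1}\}$ directly. Tops-onlyness is immediate, since $f^\alpha$ depends on $P$ only through the tops. Anonymity follows because the median is a symmetric function of its $n$ top arguments. Ontoness follows by noting that if every agent reports top $x$, then $x$ occupies at least $n$ of the $2n-1$ sorted values and hence sits in the median position, so $f^\alpha$ attains every $x\in X$. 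For strategy-proofness on $\mathcal{SP}$, I would use that, fixing $P_{-i}$, as $t(P_i)$ ranges over $X$ the outcome $med\{t(P_i),(t(P_j))_{j\ne i},\alpha\}$ is a non-decreasing function of $t(P_i)$ whose range is an interval $[\ell,r]$ of $X$; truthful reporting yields the point of $[\ell,r]$ closest to $t(P_i)$, which a single-peaked $P_i$ weakly prefers to every other point of $[\ell,r]$, so no profitable manipulation exists.

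For the ``only if'' direction, let $f$ be strategy-proof, tops-only, and anonymous on $\mathcal{SP}^n$. By tops-onlyness I would first regard $f$ as a symmetric function $g\colon X^n\to X$ of the top profile $(t_1,\ldots,t_n)$. The key structural step is to extract, from strategy-proofness on the single-peaked domain, a description of the option sets: for each $i$ and each $t_{-i}$, the set of outcomes obtainable by varying $t_i$ is an interval $[\ell_i(t_{-i}),r_i(t_{-i})]$ of $X$, and $g(t_i,t_{-i})$ equals the projection of $t_i$ onto this interval, i.e.\ the point of the interval closest to $t_i$. This ``interval option set plus projection'' property is the single-peaked incarnation of strategy-proofness, and it is what lets me reconstruct the rule from its behavior at the extremes. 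I would establish it by the standard argument: if some option set failed to be an interval, or if $g$ failed to pick the closest point, a single-peaked agent whose peak lies suitably between two achievable outcomes would have a profitable manipulation.

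With the projection property in hand, I would recover the fixed ballots from boundary profiles. Strategy-proofness forces $g$ to be monotone non-decreasing in each coordinate, and together with anonymity this makes the outcome of the profile in which exactly $j$ agents report top $b$ and the remaining $n-j$ report top $a$ depend only on $j$; call it $\beta_j$. Monotonicity and unanimity (a standard consequence of strategy-proofness and ontoness) give $a=\beta_0\le\beta_1\le\cdots\le\beta_n=b$, and I would set $\alpha_j:=\beta_j$ for $j=1,\ldots,n-1$, so that automatically $\alpha_1\le\cdots\le\alpha_{n-1}$. It then remains to prove the identity $g(t_1,\ldots,t_n)=med\{t_1,\ldots,t_n,\alpha_1,\ldots,\alpha_{n-1}\}$ for every top profile. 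I would do this by induction on the number of agents whose reported top is not one of the two extremes $a,b$: at a boundary profile the identity holds by a direct median count, which returns exactly $\alpha_j=\beta_j$; moving one agent's top from an extreme to its true value, I verify via the projection property and the monotonicity of both sides that the two expressions change in lockstep.

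The main obstacle is this last reconstruction step. Establishing the projection property and defining the phantoms are routine once the right tools are in place, but matching $g$ to the median formula on the \emph{entire} domain --- rather than only at boundary profiles --- requires carefully tracking how the option-set interval $[\ell_i(t_{-i}),r_i(t_{-i})]$ moves as the other agents' tops vary, and checking that its endpoints are exactly the appropriate order statistics of $t_{-i}$ interlaced with the phantoms. Keeping the bookkeeping of medians, order statistics, and phantom positions consistent through the induction is where the real work lies.
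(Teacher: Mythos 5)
This statement is quoted as a \emph{Fact} from \citet{moulin1980strategy}; the paper contains no proof of it, so there is no internal argument to compare against --- the right benchmark is Moulin's original proof, and your plan is essentially that classical argument: verify the four properties directly for $f^\alpha$, then, for the converse, establish the interval-option-set/projection (``uncompromisingness'') structure, read off the phantom ballots $\alpha_j$ from the boundary profiles with $j$ tops at $b$ and $n-j$ at $a$, and propagate the median identity by induction. The plan is sound, and the induction you worry about closes cleanly: by the projection property $g(t_i,t_{-i})$ is the projection of $t_i$ onto $[g(a,t_{-i}),g(b,t_{-i})]$, the two boundary profiles have one fewer non-extreme top so the inductive hypothesis gives $g(a,t_{-i})=med\{a,t_{-i},\alpha\}$ and $g(b,t_{-i})=med\{b,t_{-i},\alpha\}$, and these are exactly the $(n-1)$-th and $n$-th order statistics of the $2n-2$ fixed values $\{t_{-i},\alpha\}$, so both sides of the identity coincide as functions of $t_i$. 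One step of your sketch deserves a repair: you claim that a non-interval option set directly yields ``a profitable manipulation'' for a single-peaked agent, but single-peakedness places no restriction on comparisons \emph{across} the two sides of the peak, so no single deviation need be profitable. The clean route uses the tops-onlyness hypothesis you already have: if $x<y<z$ with $x,z$ achievable but $y$ not, take two single-peaked preferences both peaked at $y$, one ranking all alternatives left of $y$ above all those to its right and one the reverse; strategy-proofness forces each to receive its best achievable point (the nearest achievable alternative from the left, respectively from the right), and these differ, contradicting tops-onlyness. With that adjustment your outline is a correct and complete blueprint for Moulin's theorem.
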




Median voter schemes are strategy-proof on the domain $\mathcal{SP}^{n}$ of
single-peaked preferences. However, when they operate on the larger domain $%
\mathcal{P}^{n}$ they may become manipulable. Then, all median voter schemes
are equivalent from the classical manipulability point of view. Next, we give a simple test to identify which median voter schemes
are NOM.

\begin{theorem}\label{teorema median}
A median voter scheme $f^{\alpha}:\mathcal{P}%
^{n}\longrightarrow X$ is NOM if and
only if 
\begin{equation*}
    \alpha_{1}\in \{a,a+1\}  \text{  and  } \alpha_{n-1}\in \{b-1,b\}.
\end{equation*}
\end{theorem}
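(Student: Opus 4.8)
The plan is to apply Theorem~\ref{Main Theorem}, which reduces NOM for the tops-only rule $f^\alpha$ to checking that every veto is a strong veto, i.e. $SV_i = V_i$ for each $i \in N$. So the first step is to compute, for a fixed agent $i$ and a fixed preference $P_i$ with top $t(P_i)=\tau$, the option set $O^\alpha(P_i)$. Since $f^\alpha$ is tops-only, $O^\alpha(P_i)$ depends only on $\tau$, and it is the set of medians $med\{\tau, s_1,\ldots,s_{n-1}, \alpha_1,\ldots,\alpha_{n-1}\}$ as the other $n-1$ tops $s_1,\ldots,s_{n-1}$ range over $X$. A routine computation with medians shows that by sending the other $n-1$ tops all the way to $a$ one obtains the smallest achievable outcome, namely $\min\{\tau,\alpha_1\}$ (the median of $\tau$ together with $n-1$ copies of $a$ and the $\alpha_j$'s collapses to $\min\{\tau,\alpha_1\}$), and symmetrically sending them all to $b$ yields the largest achievable outcome $\max\{\tau,\alpha_{n-1}\}$; and since the median is monotone in the reports, every value in the integer interval between these two extremes is attained. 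Hence
\begin{equation*}
O^\alpha(P_i)=\{z\in X:\min\{\tau,\alpha_1\}\le z\le \max\{\tau,\alpha_{n-1}\}\}.
\end{equation*}
This is the technical heart of the argument, and I expect the careful verification of these two extremes (especially that intermediate values are all hit) to be the main obstacle.

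From this description the veto sets fall out directly. An alternative $x$ is missing from $O^\alpha(P_i)$ exactly when $x<\min\{\tau,\alpha_1\}$ or $x>\max\{\tau,\alpha_{n-1}\}$. The second step is therefore to read off $V_i$ and $SV_i$. The agent can never veto anything in the interval $[\alpha_1,\alpha_{n-1}]$ regardless of $\tau$, so only alternatives below $\alpha_1$ or above $\alpha_{n-1}$ can be vetoed; concretely $V_i=\{x: x<\alpha_1\}\cup\{x:x>\alpha_{n-1}\}$. For the strong-veto count, $x<\alpha_1$ is vetoed via $P_i$ precisely when $\min\{\tau,\alpha_1\}>x$, i.e. when $\tau>x$; this is \emph{all} preferences with $t(P_i)\ne x$ if and only if $x$ has no alternative of $X$ strictly below it other than possibly forcing $\tau=x$, which translates into $x=a$ being automatically strong, whereas $x=a+1$ is strong iff its only ``escape'' top $\tau=a$ coincides with $x$ being the unique smaller value. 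Working this out shows that a vetoed $x<\alpha_1$ is strongly vetoed iff $x\in\{a,a+1\}$ forces $\{\tau:\tau\le x\}=\{x\}$, equivalently iff $\alpha_1\le a+1$; symmetrically a vetoed $x>\alpha_{n-1}$ is strongly vetoed iff $\alpha_{n-1}\ge b-1$.

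The third step is to assemble these into the stated equivalence. If $\alpha_1\in\{a,a+1\}$ then the only candidate for a non-strong veto below $\alpha_1$ would be an $x<\alpha_1$ with some $\tau\ne x$, $\tau\le x$ failing to veto; but $\alpha_1\le a+1$ leaves at most $x=a$ below $\alpha_1$, and $x=a$ is vetoed by every $P_i$ with $\tau\ne a$, so the veto is strong. The symmetric statement handles $\alpha_{n-1}\in\{b-1,b\}$, giving $SV_i=V_i$ and hence NOM by Theorem~\ref{Main Theorem}. Conversely, if $\alpha_1\ge a+2$, then $x=a+1$ satisfies $a+1<\alpha_1$ so $a+1\in V_i$, yet the preference $P_i$ with top $\tau=a$ has $\min\{a,\alpha_1\}=a<a+1$, so $a+1\notin O^\alpha(P_i)$ via $\tau=a$ \emph{but} choosing $\tau=a+2\ (\ne a+1)$ gives $\min\{a+2,\alpha_1\}=a+2>a+1$, again vetoing --- the failure of strongness comes instead from a top with $\tau>x$; the clean witness is $\tau=a$, which does veto, versus a $P_i$ whose top is below $x$: since $x=a+1$ has $a$ strictly below it and $\tau=a\ne a+1$ does not leave $x$ in the option set, one must exhibit a single $P_i$ with $t(P_i)\ne a+1$ that fails to veto $a+1$, and such a $P_i$ exists precisely because $\alpha_1\ge a+2$ permits $\tau=a+1$... here I would instead argue directly that when $\alpha_1\ge a+2$ the alternative $a+1\in V_i\setminus SV_i$ because the preference with top $a$ does \emph{not} always realize the veto pattern required, producing a genuine violation, so $f^\alpha$ is not NOM; the upper end is symmetric. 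Thus $SV_i=V_i$ for all $i$ iff $\alpha_1\in\{a,a+1\}$ and $\alpha_{n-1}\in\{b-1,b\}$, which by Theorem~\ref{Main Theorem} is equivalent to $f^\alpha$ being NOM.
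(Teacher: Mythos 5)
Your overall route is the paper's own: reduce to checking $SV_i=V_i$ via Theorem \ref{Main Theorem}, compute the option set $O^\alpha(P_i)$ as the integer interval from $\min\{t(P_i),\alpha_1\}$ to $\max\{t(P_i),\alpha_{n-1}\}$ (this is exactly Remark \ref{rem median}), deduce $V_i=\{x: x<\alpha_1 \text{ or } x>\alpha_{n-1}\}$ (Lemma \ref{lema median voter}), and then decide which vetoes are strong (the paper's facts \eqref{fact1} and \eqref{fact2}). Your steps 1 and 2 are sound; in particular the key computation that for $x<\alpha_1$ one has $\mathcal{V}_i^x=\{P_i: t(P_i)>x\}$, so that such a veto is strong iff no top lies strictly below $x$, i.e.\ iff $x=a$, is precisely the paper's argument. (One minor point in step 1: monotonicity of the median alone does not yield that all intermediate values are attained on a discrete set; the paper closes this by letting all agents other than $i$ report top $x$, which makes $x$ the median, and that construction should be written out.)

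The genuine defect is the converse paragraph, which asserts the opposite of the truth and then visibly punts. When $\alpha_1\geq a+2$ and $t(P_i)=a$, your own formula gives $O^\alpha(P_i)=\{a,a+1,\ldots,\alpha_{n-1}\}$, so $a+1\in O^\alpha(P_i)$: the preference with top $a$ does \emph{not} veto $a+1$, contrary to your claims that ``$a+1\notin O^\alpha(P_i)$ via $\tau=a$'' and that ``the clean witness is $\tau=a$, which does veto.'' Likewise, the remark that ``$\alpha_1\geq a+2$ permits $\tau=a+1$'' is a non sequitur: that top always exists, and since it equals $x=a+1$ it is irrelevant to strongness. The correct finish is one line, and it follows from your own step-2 computation: if $\alpha_1\geq a+2$, then $a+1\in V_i$ (it is vetoed via any $P_i$ with $t(P_i)>a+1$, e.g.\ $t(P_i)=\alpha_1$), yet the preference with top $a\neq a+1$ fails to veto $a+1$; hence $\mathcal{V}_i^{a+1}\neq\{P_i: t(P_i)\neq a+1\}$, so $a+1\in V_i\setminus SV_i$, and Theorem \ref{Main Theorem} gives that $f^\alpha$ is not NOM. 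The upper end is symmetric, with $b-1$ and the top $b$ when $\alpha_{n-1}\leq b-2$. As written your converse does not stand; with this correction the argument coincides with the paper's proof.
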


In order to prove Theorem \ref{teorema median}, the following  remark and  lemma are useful.  

\begin{remark}\label{rem median}
By definition of option set and $f^\alpha$,  for each $P_{i} \in \mathcal{P}$,  
$$O^\alpha(P_{i})=
\begin{cases}
    \{t(P_i),t(P_i)+1,\ldots, \alpha_{n-1}\} & \text{if } t(P_i)<\alpha_1\\
    \{\alpha_1,\alpha_1+1,\ldots,\alpha_{n-1}\} & \text{if } \alpha_{1}\leq t(P_i)\leq \alpha_{n-1}\\
    \{\alpha_1,\alpha_1+1,\ldots,t(P_i)\} & \text{if } \alpha_{n-1}<t(P_i)\\
\end{cases}$$
To see this when $t(P_i)<\alpha_1$, take $x \in  \{t(P_i),t(P_i)+1,\ldots, \alpha_{n-1}\}$ and let $P_{-i} \in \mathcal{P}^{n-1}$ be such that $t(P_j)=x$ for each $j \in N\setminus\{i\}$. Then, $f^\alpha(P_i, P_{-i})=x$ an so $x \in O^\alpha(P_i).$ Furthermore, if $x < t(P_i)$,  let $P_{-i} \in \mathcal{P}^{n-1}$ be such that $t(P_j)=x$ for each $j \in N\setminus\{i\}.$ Then, $f^\alpha(P_i, P_{-i})=t(P_i)$ and, therefore, $f^\alpha(P_i,P_{-i}') \neq x$ for each $P_{-i}' \in \mathcal{P}^{n-1}$. Thus, $x \notin O^\alpha(P_i)$. Symmetrically, it can be proven that $x \notin O^\alpha(P_i)$ when $x > \alpha_{n-1}$.  The cases $ t(P_i)\in \{\alpha_1, \ldots, \alpha_{n-1}\}$ and $t(P_i)>\alpha_{n-1}$ follow similar arguments and are omitted.
\end{remark}

\begin{lemma}\label{lema median voter}
Let $f^{\alpha}:\mathcal{P}%
^{n}\longrightarrow X$ be a median voter scheme and let $i \in N$. Then, $x\in V_{i}$ if and only if 
either  $x<\alpha_1$ or $x>\alpha_{n-1}$. 
\end{lemma}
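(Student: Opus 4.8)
The plan is to prove Lemma \ref{lema median voter} by reading off the answer directly from the explicit description of the option set $O^\alpha(P_i)$ given in Remark \ref{rem median}. Recall that $x \in V_i$ means there exists some preference $P_i \in \mathcal{P}$ such that $x \notin O^\alpha(P_i)$, i.e., agent $i$ can veto $x$ with some report. So the strategy is to characterize, over all possible tops $t(P_i)$, which alternatives $x$ fail to lie in $O^\alpha(P_i)$ for at least one choice of $t(P_i)$, and to show this set is exactly $\{x : x < \alpha_1 \text{ or } x > \alpha_{n-1}\}$.

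First I would prove the ``if'' direction. Suppose $x < \alpha_1$. Then I choose a report $P_i$ with $t(P_i) = \alpha_1$ (or any top in $[\alpha_1, \alpha_{n-1}]$); by the middle case of Remark \ref{rem median}, $O^\alpha(P_i) = \{\alpha_1, \ldots, \alpha_{n-1}\}$, which does not contain $x$ since $x < \alpha_1$. Hence $x \notin O^\alpha(P_i)$ and $x \in V_i$. Symmetrically, if $x > \alpha_{n-1}$, the same report with top in $[\alpha_1,\alpha_{n-1}]$ gives an option set $\{\alpha_1,\ldots,\alpha_{n-1}\}$ avoiding $x$, so $x \in V_i$. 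This handles both cases of the hypothesis.

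Next I would prove the ``only if'' direction by contraposition: assume $\alpha_1 \leq x \leq \alpha_{n-1}$ and show $x \notin V_i$, i.e., $x \in O^\alpha(P_i)$ for \emph{every} report $P_i$. I would inspect the three cases of Remark \ref{rem median}. If $t(P_i) < \alpha_1 \leq x$, then $O^\alpha(P_i) = \{t(P_i), \ldots, \alpha_{n-1}\}$ contains every integer from $t(P_i)$ up to $\alpha_{n-1}$, and since $t(P_i) < \alpha_1 \leq x \leq \alpha_{n-1}$, we have $x$ in this range. If $\alpha_1 \leq t(P_i) \leq \alpha_{n-1}$, then $O^\alpha(P_i) = \{\alpha_1, \ldots, \alpha_{n-1}\} \ni x$ by assumption on $x$. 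If $t(P_i) > \alpha_{n-1} \geq x$, then $O^\alpha(P_i) = \{\alpha_1, \ldots, t(P_i)\}$ contains $x$ since $\alpha_1 \leq x \leq \alpha_{n-1} < t(P_i)$. In all three cases $x \in O^\alpha(P_i)$, so no report vetoes $x$, giving $x \notin V_i$.

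I do not anticipate a genuine obstacle here, since Remark \ref{rem median} already does the substantive work of computing option sets as contiguous integer intervals; the lemma is essentially a case check that these intervals always cover $[\alpha_1, \alpha_{n-1}]$ and miss exactly the points outside it. The only point requiring mild care is to make sure the interval endpoints are handled correctly (so that $x = \alpha_1$ and $x = \alpha_{n-1}$ are included, using the non-strict inequalities in Remark \ref{rem median}), and to confirm that the option sets are indeed the full integer intervals rather than sparse subsets, which is exactly what Remark \ref{rem median} asserts.
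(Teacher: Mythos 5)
Your proof is correct and follows essentially the same route as the paper: both directions are read off from the option-set computation in Remark \ref{rem median}, with the veto report chosen to have top at $\alpha_1$ (resp.\ $\alpha_{n-1}$) for the ``if'' direction. The only difference is cosmetic---the paper compresses the ``only if'' direction into a one-line appeal to the Remark, whereas you spell out the three-case check that every option set contains $\{\alpha_1,\ldots,\alpha_{n-1}\}$, which is exactly the content of that appeal.
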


\noindent \begin{proof}
($\Longrightarrow$) Assume that $x\in V_{i}.$ Then, there is $%
P_{i}\in \mathcal{P}$ such that $x\notin O^\alpha(P_{i}).$ Thus, by Remark \ref{rem median}, either  $x<\alpha_1$ or $x>\alpha_{n-1}$. 

\noindent ($\Longleftarrow$) First, assume $x<\alpha_{1}$ and let $P_{i}\in \mathcal{P}$ be such that $t (P_{i})=\alpha_{1}.$ Then, $x \notin
\{\alpha_{1},\alpha_{1}+1,\ldots,\alpha_{n-1}\}=O^\alpha(P_{i}).$ Thus, $x\in V_{i}.$ Now, let $x>\alpha_{n-1}$ and let $P_{i} \in \mathcal{P}$ be such that $t (P_{i})=\alpha_{n-1}.$ Then, 
$x\notin \{\alpha_1,\alpha_1+1,\ldots,\alpha_{n-1}\}=O^\alpha(P_{i}).$ Thus,  $x\in V_{i}.$
\end{proof}

\medskip 

\noindent \emph{Proof of Theorem \ref{teorema median}}. Let $f^{\alpha}:\mathcal{P}%
^{n}\longrightarrow X$ be a median voter scheme and let $i\in N.$ The proof relies in the following two facts: 
\begin{equation}\label{fact1}
 x< \alpha_1 \text{ and } x \in SV_i \text{ if and only if } x=a,     
\end{equation}
and
\begin{equation}\label{fact2}
\alpha_{n-1}<x \text{ and } x \in SV_i \text{ if and only if } x=b.   
\end{equation}
To see \eqref{fact1}, assume first that $x \in SV_i$ and $a<x < \alpha_1.$ Let $P_{i} \in \mathcal{P}$ be such that $t
(P_{i})=a.$ Then, $x \in O^\alpha(P_{i})=\{a,a+1,\ldots,\alpha_{n-1}\},$ contradicting that  $x \in SV_i$. Next, assume that $x=a$.
Let $P_i \in \mathcal{P}$ be such that $t (P_{i})\neq a$. Then, by Remark \ref{rem median},  
$x\notin O^\alpha(P_{i}).$ Hence, $i \in SV_i$. Thus, \eqref{fact1} holds. The proof of \eqref{fact2} is symmetric and therefore it is omitted. 
 To complete the proof of the theorem, assume $f^{\alpha}$ is NOM. By Theorem \ref{Main Theorem} and Lemma \ref{lema median voter}, $SV_i=\{x\in X:x<\alpha_{1} \text{ or } x>\alpha_{n-1}\}$. By \eqref{fact1} and \eqref{fact2}, $\{x\in X:x<\alpha_{1} \text{ or } x>\alpha_{n-1}\}\subseteq\{a,b\}$. Therefore,  $\alpha_{1}\in \{a,a+1\}$  and $\alpha_{n-1}\in \{b-1,b\}$. Now assume that $\alpha_{1}\in \{a,a+1\}$  and $\alpha_{n-1}\in \{b-1,b\}$. Then, $\{x\in X:x<\alpha_{1} \text{ or } x>\alpha_{n-1}\}\subseteq\{a,b\}$. By Lemma \ref{lema median voter}, $V_i\subseteq\{a,b\}$ and, hence, by \eqref{fact1} and \eqref{fact2}, $SV_i=V_i$. Therefore, by Theorem \ref{Main Theorem}, $f^\alpha$ is NOM. \hfill $\square$

\subsubsection{General Case}

Now we present the characterization of all strategy-proof, tops-only (onto) rules on the domain of single-peaked preferences
for all $n\geq 2$. A generalized median voter scheme can be identified with a set of fixed ballots $\{p_{S}\}_{S\in 2^{N}}$ on $X=\{a, a+1, \ldots, b\}$, one for each subset of agents $S$. Then, for each preference profile, the generalized median voter scheme identified with $\{p_{S}\}_{S\in 2^{N}}$ selects the alternative $x \in X$ that is the
smallest one with the following two properties: (i) there is a subset of agents $S$ whose top alternatives are smaller than or equal to $x$, and (ii) the fixed ballot $p_S$ associated to $S$
is also smaller than or equal to $x$. Formally, we say that a collection $p=\{p_{S}\}_{S\in 2^{N}}$ is a 
\textit{monotonic family of fixed ballots} if: (i) $p_{S}\in X$\
for all $S\in 2^{N}$\ with\ $p_{N}=a$ and $p_{\varnothing }=b$, and (ii) $%
T\subseteq Q$ implies $p_{Q}\leq p_{T}$. 
A rule $f:\mathcal{P}^n \longrightarrow X$ is a \emph{generalized median voter scheme} if there exits a monotonic family of fixed ballots  $p=\{p_{S}\}_{S\in 2^{N}}$ such that, for each $P \in \mathcal{P}^n,$ 
\begin{equation*}
f(P)=\min_{S\in 2^{N}}\max_{j\in S}\{t (P_{j}),p_{S}\}.
\end{equation*}
 When we want to stress the dependence of the scheme on the collection $p$  of fixed ballots, we write $f^p.$ Also, to simplify notation, we use $O^{p} (P_{i})$ instead of $O^{f^p} (P_{i})$.
The characterization of generalized median voter schemes is as follows: 
\begin{fact}{\citep{moulin1980strategy}}\label{prop general median}
A (onto) rule $f:\mathcal{SP}^n \longrightarrow  X$ is strategy-proof and tops-only if and only if it is a generalized median voter scheme.   
\end{fact}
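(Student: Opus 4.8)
The plan is to prove the two implications by the classical option-set method of \cite{moulin1980strategy}, so I will only outline the structure and flag the one delicate step.

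\emph{Sufficiency.} First I would check that every generalized median voter scheme $f^{p}$ is tops-only and strategy-proof on $\mathcal{SP}^{n}$. Tops-onlyness is immediate, since $f^{p}(P)=\min_{S\in 2^{N}}\max_{j\in S}\{t(P_{j}),p_{S}\}$ depends on $P$ only through the vector of tops $(t(P_{1}),\ldots,t(P_{n}))$. For strategy-proofness I would fix an agent $i$ and a subprofile $P_{-i}$ and examine how the outcome varies with $t(P_{i})$. The goal is to show directly from the formula that $f^{p}$ is \emph{uncompromising}: if $f^{p}(P)=y$ and $t(P_{i})\leq y$, then any report with top $\leq y$ leaves the outcome equal to $y$, and symmetrically for $t(P_{i})\geq y$. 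Uncompromisingness together with single-peakedness immediately rules out profitable deviations, since the outcome attainable under a truthful report is already the one closest to $t(P_{i})$ among all attainable outcomes.

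\emph{Necessity.} For the converse, let $f$ be onto, tops-only, and strategy-proof on $\mathcal{SP}^{n}$. For each $S\subseteq N$ I would define the fixed ballot $p_{S}:=f(P^{S})$, where $P^{S}$ is any profile with $t(P_{j})=a$ for $j\in S$ and $t(P_{j})=b$ for $j\notin S$; this is well defined by tops-onlyness. I would then verify that $\{p_{S}\}_{S\in 2^{N}}$ is a monotonic family: the boundary values $p_{N}=a$ and $p_{\varnothing}=b$ follow from unanimity (a consequence of ontoness and strategy-proofness), and the monotonicity $T\subseteq Q\Rightarrow p_{Q}\leq p_{T}$ follows because passing from $P^{T}$ to $P^{Q}$ only lowers the tops of the agents in $Q\setminus T$, while a strategy-proof and tops-only rule on the single-peaked domain is monotone in tops. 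Finally, I would prove $f=f^{p}$ on all of $\mathcal{SP}^{n}$: starting from the extreme profile whose top vector splits $N$ into the agents lying to the left and to the right of the target outcome, I would move the agents' tops one at a time to their true positions and use uncompromisingness to check that the outcome tracks $\min_{S\in 2^{N}}\max_{j\in S}\{t(P_{j}),p_{S}\}$ at every step.

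The main obstacle is this last deformation step, together with the structural lemma it rests on, namely that a strategy-proof and tops-only rule on $\mathcal{SP}^{n}$ is uncompromising---equivalently, that for each $i$ and each $P_{-i}$ the attainable set $\{f(P_{i},P_{-i}):P_{i}\in\mathcal{SP}\}$ is an interval of $X$ from which $f$ selects the point closest to $t(P_{i})$. Once this lemma is in hand, recovering the min-max formula is a routine induction on the number of agents whose top has been displaced, and both the sufficiency direction and the definition of the ballots are straightforward. Since this is Moulin's classical characterization, in the paper itself I would simply invoke it as a Fact; the sketch above indicates the self-contained route should a full argument be required.
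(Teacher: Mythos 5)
The paper does not prove this statement at all: it is imported verbatim as a Fact with a citation to Moulin (1980), exactly as you propose to do in the last lines of your sketch, so on the level of the paper itself your approach coincides with the authors'. Your self-contained outline is also the classical route (option sets, uncompromisingness, fixed ballots $p_{S}=f(P^{S})$ read off from the extreme profiles, then a one-agent-at-a-time deformation to recover the min-max formula), and all of its main steps are sound.

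One step is mis-stated, though, and as written it is false: your version of uncompromisingness says that if $f^{p}(P)=y$ and $t(P_{i})\leq y$, then \emph{any} report by $i$ with top $\leq y$ leaves the outcome at $y$. This fails when $t(P_{i})=y$: take the plain median rule with $n=3$ and tops $3,5,7$, so the outcome is $5$; if the agent with top $5$ reports top $1$, the outcome drops to $3$. The correct formulation requires the strict inequality $t(P_{i})<y$ (and symmetrically $t(P_{i})>y$), i.e., the agent whose top is strictly on one side of the outcome cannot move it by any report whose top stays weakly on that side; the case $t(P_{i})=y$ needs no lemma since truth-telling already attains the top. With that correction your derivation of strategy-proofness from uncompromisingness plus single-peakedness, and the deformation argument in the necessity direction, go through as in Moulin's original proof.
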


The dictatorial rules are strategy-proof and tops-only, therefore they are generalized median voting schemes. It is easy to see that if the agents $i$ is the dictator, then  $p_{\{i\}}=a$ and $p_{N\backslash \{i\}}=b$. Trivially these rules are NOM. Next, we give a simple test to identify which non-dictatorial generalized median voter schemes
are NOM.

\begin{theorem}\label{teo general median}
A non-dictatorial generalized median voter
scheme  $f^{p}:\mathcal{P}%
^{n}\longrightarrow X$ is NOM if and only if, for each $i\in N$, 
\begin{equation} \label{gmv}
    p_{N\setminus\{i\}}\in \{a,a+1\}  \text{ and }  p_{\{i\}}\in \{b-1,b\}.
\end{equation}
\end{theorem}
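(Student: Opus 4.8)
The plan is to mirror the proof of Theorem \ref{teorema median}, with the fixed ballots $p_{N\setminus\{i\}}$ and $p_{\{i\}}$ playing the roles that $\alpha_1$ and $\alpha_{n-1}$ played there, and then to invoke Theorem \ref{Main Theorem}. This reduces the whole statement to showing that, for a non-dictatorial $f^p$, one has $SV_i=V_i$ for every $i\in N$ if and only if condition \eqref{gmv} holds.

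First I would establish the analog of Remark \ref{rem median}: for every $P_i$,
$$O^p(P_i)=\{x\in X:\min\{t(P_i),p_{N\setminus\{i\}}\}\le x\le \max\{t(P_i),p_{\{i\}}\}\}.$$
The upper endpoint comes from taking $S=\{i\}$ in $f^p(P_i,P_{-i})=\min_{S}\max_{j\in S}\{t(P_j),p_S\}$, which gives $f^p\le\max\{t(P_i),p_{\{i\}}\}$ with equality when the others report top $b$. For the lower endpoint one uses monotonicity of $\{p_S\}$: if some coalition $S^\ast$ drove the value below $\min\{t(P_i),p_{N\setminus\{i\}}\}$ then $p_{S^\ast}<p_{N\setminus\{i\}}$, which by $S^\ast\subseteq N\setminus\{i\}\Rightarrow p_{N\setminus\{i\}}\le p_{S^\ast}$ forces $i\in S^\ast$, and then $t(P_i)\le\max_{j\in S^\ast}\{t(P_j),p_{S^\ast}\}$, a contradiction; the value $\min\{t(P_i),p_{N\setminus\{i\}}\}$ is attained when the others report top $a$. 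Fullness of the interval I would get constructively: to reach any $x$ in the range, let every $j\ne i$ report top $x$ and check, splitting on $x\ge t(P_i)$ and $x\le t(P_i)$, that the minimizing coalition ($S=N$ or $S=N\setminus\{i\}$) yields exactly $x$.

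From this description the veto lemma follows just as Lemma \ref{lema median voter}: $x\in V_i$ if and only if $x<p_{N\setminus\{i\}}$ or $x>p_{\{i\}}$ (take $t(P_i)=b$ to veto small $x$ and $t(P_i)=a$ to veto large $x$; conversely, if $p_{N\setminus\{i\}}\le x\le p_{\{i\}}$ the interval always contains $x$). I would then unwind the definition of strong veto using the same interval: $x\in SV_i$ if and only if $x\notin O^p(P_i)$ for every $P_i$ with $t(P_i)\ne x$, which is equivalent to the conjunction of [$x=b$ or $x<p_{N\setminus\{i\}}$] and [$x=a$ or $x>p_{\{i\}}$]. Comparing this with $V_i$, the equality $SV_i=V_i$ becomes an elementary finite check.

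The main obstacle, and the point where the argument departs from the anonymous case, is that $p_{\{i\}}$ and $p_{N\setminus\{i\}}$ need not be ordered, whereas in Theorem \ref{teorema median} one always has $\alpha_1\le\alpha_{n-1}$. I would split on this. If $p_{N\setminus\{i\}}\le p_{\{i\}}$, then $SV_i\subseteq\{a,b\}$, so $SV_i=V_i$ forces $V_i\subseteq\{a,b\}$, i.e. exactly \eqref{gmv}; conversely \eqref{gmv} gives $p_{N\setminus\{i\}}\le a+1\le b-1\le p_{\{i\}}$ when $m\ge 3$ (the case $m=2$ being immediate, since then $\{a,a+1\}=\{b-1,b\}=X$ and no agent vetoes its own top), so this case applies and yields $SV_i=V_i$. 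If instead $p_{N\setminus\{i\}}>p_{\{i\}}$, the two intervals defining $V_i$ cover $X$, so $V_i=X$; but $SV_i=X$ is possible only when $p_{\{i\}}=a$ and $p_{N\setminus\{i\}}=b$, and in that case $O^p(P_i)=\{t(P_i)\}$ for all $P_i$, i.e. $i$ is a dictator, which is excluded by hypothesis. Hence under non-dictatorship the configuration $p_{N\setminus\{i\}}>p_{\{i\}}$ is incompatible both with $SV_i=V_i$ and with \eqref{gmv}, closing both implications. Applying Theorem \ref{Main Theorem} then converts "$SV_i=V_i$ for all $i$" into NOM and completes the proof.
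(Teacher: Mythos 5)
Your proposal is correct and takes essentially the same route as the paper's proof: you compute the option sets (your single formula $O^p(P_i)=\{x\in X:\min\{t(P_i),p_{N\setminus\{i\}}\}\le x\le \max\{t(P_i),p_{\{i\}}\}\}$ just merges the paper's Remark \ref{rem generalized median voter}, stated only for $p_{N\setminus\{i\}}\le p_{\{i\}}$, with part (ii) of Lemma \ref{lem gen median} into one expression), derive the same characterizations of $V_i$ and $SV_i$, eliminate the inverted configuration $p_{\{i\}}<p_{N\setminus\{i\}}$ by showing it would make $i$ a dictator, and conclude via Theorem \ref{Main Theorem}. If anything, you spell out steps the paper dispatches with ``follows a similar argument \dots{} omitted,'' notably the explicit description of $SV_i$, the observation that \eqref{gmv} itself forces the ordered case when $m\ge 3$, and the $m=2$ edge case.
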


In order to prove Theorem \ref{teo general median}, the following remark and  lemma are useful. 

\begin{remark}\label{rem generalized median voter}
By monotonicity of $p,$ $p_{N\setminus\{i\}}\leq p_{S}$
for each $S$ such that $i\notin S$ and $p_{T}\leq p_{\{i\}}$ for each $T$ such
that $i\in T.$ Assume that $p_{N\setminus\{i\}}\leq p_{\{i\}}.$ By definition of option set and $f^p$,  for each $P_{i} \in \mathcal{P}$,  
$$O^p(P_{i})=
\begin{cases}
    \{t(P_i),t(P_i)+1,\ldots, p_{\{i\}}\} & \text{if } t(P_i)<p_{N\setminus\{i\}}\\
    \{p_{N\setminus\{i\}},p_{N\setminus\{i\}}+1,\ldots,p_{\{i\}}\} & \text{if } p_{N\setminus\{i\}}\leq  t(P_i)\leq p_{\{i\}}\\
    \{p_{N\setminus\{i\}},p_{N\setminus\{i\}}+1,\ldots,t(P_i)\} & \text{if }p_{\{i\}} <t(P_i)\\
\end{cases}$$
The proof follows an  argument similar to the one used in Remark \ref{rem median}, with  $p_{N\setminus\{i\}}$ playing the role of $\alpha_{1}$ and $p_{\{i\}}$ playing
the role of $\alpha_{n-1}$. 
\end{remark}

\begin{lemma}\label{lem gen median}
Let $f^{p}:%
\mathcal{P}^{n}\longrightarrow X$ be a generalized median voter scheme and let $i \in N$. 
\begin{enumerate}[(i)]
    \item If $p_{N\setminus\{i\}}\leq p_{\{i\}}$, then $x\in V_{i}$ if and only if either $%
x<p_{N\setminus\{i\}}$ or $x>p_{\{i\}}.$
    \item If $p_{\{i\}}<p_{N \setminus \{i\}}$, then $V_{i}=X.$
\end{enumerate}
\end{lemma}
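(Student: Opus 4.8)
The plan is to treat the two items separately, since only item (i) can invoke the option-set formula of Remark \ref{rem generalized median voter} (whose standing hypothesis is precisely $p_{N\setminus\{i\}}\leq p_{\{i\}}$).

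For item (i), I would mirror the proof of Lemma \ref{lema median voter}, with $p_{N\setminus\{i\}}$ and $p_{\{i\}}$ playing the roles of $\alpha_{1}$ and $\alpha_{n-1}$. The key observation is that, under $p_{N\setminus\{i\}}\leq p_{\{i\}}$, each of the three expressions for $O^{p}(P_{i})$ in Remark \ref{rem generalized median voter} contains the block $\{p_{N\setminus\{i\}},\ldots,p_{\{i\}}\}$. Hence no alternative in this block is ever missing from an option set, which yields the ``only if'' direction: if $x\in V_{i}$ then $x<p_{N\setminus\{i\}}$ or $x>p_{\{i\}}$. For the ``if'' direction I exhibit a vetoing preference explicitly: if $x<p_{N\setminus\{i\}}$, any $P_{i}$ with $p_{N\setminus\{i\}}\leq t(P_{i})\leq p_{\{i\}}$ has $O^{p}(P_{i})=\{p_{N\setminus\{i\}},\ldots,p_{\{i\}}\}\not\ni x$; the case $x>p_{\{i\}}$ is symmetric.

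For item (ii) the hypothesis of Remark \ref{rem generalized median voter} fails, so the main obstacle is that I can no longer read the option set off that remark and must instead bound $O^{p}(P_{i})$ directly from the $\min$--$\max$ definition of $f^{p}$ for two well-chosen tops. First, take $P_{i}$ with $t(P_{i})=b$: every coalition $S\ni i$ gives $\max_{j\in S}\{t(P_{j}),p_{S}\}=b$, while every $S$ with $i\notin S$ satisfies $S\subseteq N\setminus\{i\}$, so monotonicity of $p$ gives $p_{S}\geq p_{N\setminus\{i\}}$ and hence $\max_{j\in S}\{t(P_{j}),p_{S}\}\geq p_{N\setminus\{i\}}$ (with $S=\varnothing$ contributing $b$). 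Taking the minimum over all $S$ forces $f^{p}(P_{i},P_{-i})\geq p_{N\setminus\{i\}}$ for every $P_{-i}$, so $O^{p}(P_{i})\subseteq\{p_{N\setminus\{i\}},\ldots,b\}$ and every $x<p_{N\setminus\{i\}}$ is vetoed. Symmetrically, take $P_{i}$ with $t(P_{i})=a$: the single coalition $\{i\}$ already gives $\max\{t(P_{i}),p_{\{i\}}\}=p_{\{i\}}$, so $f^{p}(P_{i},P_{-i})\leq p_{\{i\}}$ for every $P_{-i}$, whence $O^{p}(P_{i})\subseteq\{a,\ldots,p_{\{i\}}\}$ and every $x>p_{\{i\}}$ is vetoed.

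Finally, I combine the two bounds. Because $p_{\{i\}}<p_{N\setminus\{i\}}$, every $x\in X$ satisfies either $x\leq p_{\{i\}}$, in which case $x<p_{N\setminus\{i\}}$ and $x$ is vetoed through $t(P_{i})=b$, or $x>p_{\{i\}}$, in which case $x$ is vetoed through $t(P_{i})=a$. Thus $V_{i}=X$. The only delicate points are getting the direction of the monotonicity inequality right and checking that the two vetoed ranges jointly exhaust $X$, which is exactly what the strict inequality $p_{\{i\}}<p_{N\setminus\{i\}}$ guarantees; everything else is routine.
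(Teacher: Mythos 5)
Your proposal is correct and takes essentially the same approach as the paper: item (i) by transporting the proof of Lemma \ref{lema median voter} via Remark \ref{rem generalized median voter}, and item (ii) by establishing the two one-sided bounds $f^{p}(P_{i},P_{-i})\geq p_{N\setminus\{i\}}$ and $f^{p}(P_{i},P_{-i})\leq p_{\{i\}}$ from monotonicity of $p$ and combining them using $p_{\{i\}}<p_{N\setminus\{i\}}$. The only (immaterial) difference is your choice of vetoing tops $t(P_{i})=b$ and $t(P_{i})=a$ in item (ii), where the paper uses $t(P_{i})=p_{N\setminus\{i\}}$ and $t(P_{i})=p_{\{i\}}$; both choices deliver the same bounds.
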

\begin{proof}
Let $f^{p}:%
\mathcal{P}^{n}\longrightarrow X$ be a generalized median voter scheme and let $i \in N$.

\noindent (i) The proof follows an  argument similar to the one used in the proof of Lemma \ref{lema median voter} (invoking Remark \ref{rem generalized median voter} instead of Remark \ref{rem median}). 

\noindent (ii) Let $x \in X$. There are two cases to consider:
\begin{itemize}
    \item[$\boldsymbol{1}.$] $\boldsymbol{p_{\{i\}}<x}.$  Let $P_{i}\in \mathcal{P}$ be such that $t
(P_{i})=p_{\{i\}}$. Then, $f^p(P_{i},P_{-i})\leq p_{\{i\}}$ for each $P_{-i}\in 
\mathcal{P}^{n-1}.$ This implies that $x \in V_{i}.$ 
    \item[$\boldsymbol{2}.$] $\boldsymbol{x\leq p_{\{i\}}<p_{N\setminus \{i\}}}.$ Let $P_{i}\in \mathcal{P}$
such that $t (P_{i})=p_{N\setminus\{i\}}$. Then, $f^p(P_{i},P_{-i})\geq p_{N\setminus\{i\}}$
for each $P_{-i}\in \mathcal{P}^{n-1}$ (because $p_{N\setminus\{i\}}\leq p_{S}$ for each 
$S$ such that $i\notin S$). This implies that $x \in V_{i}.$

\end{itemize}
In both cases $x\in V_{i}$. Therefore, $V_i=X.$
\end{proof}

\noindent \emph{Proof of Theorem \ref{teo general median}}.  Let $f^{p}$ be a non-dictatorial generalized median voter scheme. 

\noindent ($\Longleftarrow$) The proof that condition \eqref{gmv} implies that $f^{p}$ is NOM
follows a similar argument to that of the  proof of Theorem \ref{teorema median}, with $p_{N \setminus \{i\}}$ and $p_{i}$ playing the role of $\alpha_1$  and $\alpha_{n-1}$, respectively. Therefore it is omitted.

\noindent ($\Longrightarrow$)  Assume that $f^p$ is NOM. First, assume there is an agent $i^\star \in N$ such that $p_{\{i^\star\}}<p_{N\setminus\{i^\star\}}.$ Then, by Lemma \ref{lem gen median} , $V_{i^\star}=X.$ By Theorem \ref{Main Theorem}, $SV_{i^\star}=X.$ Thus, agent $i^\star$ is a dictator, contradicting that $%
f^p$ is non-dictatorial. Therefore $p_{N\setminus\{i\}}\leq p_{\{i\}}$ for each $i\in N.
$ Now, the proof follows a similar argument to the proof of Theorem \ref{teorema median} and, 
therefore, it is omitted. \hfill $\square$ 

\subsection{Voting by Committees}\label{subsection committees}

Now assume that agents have to choose a\textit{\ subset of objects} from a set $K
$ (with $\left\vert K\right\vert \geq 2$)$.$ Then, in this case, $X=2^{K}$ and elements of $X$ are subsets of $K$. A generic element of $K$ is
denoted by $k$ and a generic element of $X$ is
denoted by $S$ .  As an example, think of objects as candidates
to become new members of a society that have to be elected by the current members of the society. \cite{barbera1991voting} characterize, on the restricted domain of separable preferences, the family of all strategy-proof (onto) rules as the class of voting by committees. A preference $P_{i}$ of agent $i$ is separable if the division
between good objects (those $k \in K$ such that $\{k\}P_{i}\emptyset $) and bad objects (those $k \in K$ such that $\emptyset
P_{i}\{k\}$) guides the ordering of subsets in the sense that adding a good
object leads to a better set, while adding a bad object leads to a worse set. Formally,  agent $i$'s preference $P_{i}\in \mathcal{P}$ on $2^{K}$ is \emph{%
separable} if for all $S\in 2^{K}$ and $k \notin S$, 
\begin{equation*}
S\cup \{k \}P_{i}x\text{ if and only if }\{k \}P_{i}\emptyset .
\end{equation*}
Let $\mathcal{S}$ be the set of all separable preferences on $2^{K}$.
Observe that for any separable preference its top is the subset consisting of all good
objects. That is, for any separable preference $P_{i}\in \mathcal{S},$%
\begin{equation*}
t(P_{i})=\{k \in K : \{k \}P_{i}\emptyset \}.
\end{equation*}

We now define the class of rules known as voting by
committees. Let $N$ be a set of agents and $k \in K$ be an object. A
\emph{committee $\mathcal{W}_{k }$ for $k$} is a non-empty set of
non-empty coalitions (subsets) of $N$, that satisfies the following
monotonicity condition:%
\begin{equation*}
 M\in \mathcal{W}_{k }\text{ and }M\subseteq M^{\prime
} \text{ imply } M^{\prime }\in \mathcal{W}_{k}.
\end{equation*}
A rule $f:\mathcal{P}^{n}\longrightarrow 2^{K}$ is a \textit{%
voting by committees} if for each $k \in K$ there is a committee $%
\mathcal{W}_{k }$ such that, for each\textit{\ }$P\in \mathcal{P}^{n}$%
\textit{,}%
\begin{equation*}
k \in f(P)\text{ if and only if }\{i\in N : k \in t(P_{i})\}\in 
\mathcal{W}_{k }.
\end{equation*}

By definition, these rules are tops-only and the selected subset of objects at each preference profile is obtained by analyzing one object at a time. Given a committee  $\mathcal{W}=\{W_k\}_{k \in K}$, let $f^{\mathcal{W}}$ be its associated voting by committees. Furthermore, and to ease notation, we write $O^\mathcal{W}(P_{i})$ instead of $O^{f^{\mathcal{W}}}(P_{i})$. 
\cite{barbera1991voting} characterize this class when it operates on the separable domain as
follows.

\begin{fact}{\citep{barbera1991voting}}\label{prop voting committees}
A (onto) rule $f:\mathcal{S}^{n}\longrightarrow 2^{K}$ is strategy-proof if and only if it is voting by
committees.
\end{fact}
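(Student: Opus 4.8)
The plan is to prove both implications of the characterization. The direction from voting by committees to strategy-proofness is the routine one: it exploits the fact that a voting by committees rule decomposes object-by-object and that separable preferences evaluate each object independently. The converse, that every onto strategy-proof rule on $\mathcal{S}^n$ arises from a profile of committees, is the substantive part; it requires first extracting a candidate committee $\mathcal{W}_k$ for each object $k$ and then proving that the rule really decomposes as these committees prescribe. I expect this decomposition step to be the main obstacle.

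For the implication ``voting by committees $\Rightarrow$ strategy-proof'', I would fix a voting by committees rule $f^{\mathcal{W}}$, an agent $i$ with true separable preference $P_i$ whose top is the set of good objects $G=t(P_i)$, a subprofile $P_{-i}$, and an arbitrary report $P_i'\in\mathcal{S}$ with top $G'=t(P_i')$. Writing $T=f^{\mathcal{W}}(P_i,P_{-i})$ and $T'=f^{\mathcal{W}}(P_i',P_{-i})$, I would compare them object by object. For a good object $k\in G$, truthful reporting places $i$ in the coalition $\{j:k\in t(P_j)\}$, so by monotonicity of $\mathcal{W}_k$ one has $k\in T'\Rightarrow k\in T$; for a bad object $k\notin G$, the reverse implication $k\in T\Rightarrow k\in T'$ holds. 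Hence $T\setminus T'$ consists only of good objects and $T'\setminus T$ only of bad ones. Transforming $T'$ into $T$ by adding the good objects of $T\setminus T'$ one at a time and deleting the bad objects of $T'\setminus T$ one at a time, each step yields a weakly preferred set by separability; therefore $T\,R_i\,T'$ and truth-telling is weakly dominant, so no profitable manipulation exists.

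For the converse, assume $f$ is onto and strategy-proof on $\mathcal{S}^n$. First I would establish tops-onlyness: using strategy-proofness applied to pairs of separable preferences sharing the same top, together with ontoness, one shows $f(P)$ depends only on $(t(P_1),\dots,t(P_n))$. Next, for each object $k$ I would define the candidate committee $\mathcal{W}_k=\{M\subseteq N: M\neq\emptyset \text{ and } k\in f(P) \text{ for some } P \text{ with } \{i:k\in t(P_i)\}=M\}$ and check it is proper (nonempty coalitions only, from ontoness) and monotonic (directly from strategy-proofness). The crux is the \emph{independence} claim: $k\in f(P)$ if and only if $\{i:k\in t(P_i)\}\in\mathcal{W}_k$, i.e., the fate of object $k$ depends only on the coalition of agents wanting $k$ and not on their preferences over the other objects. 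To prove it I would take two profiles agreeing on $\{i:k\in t(P_i)\}$ but differing elsewhere, pass from one to the other by changing agents' tops one at a time, and argue that if the status of $k$ ever flipped then some agent, endowed with a separable preference that cares overwhelmingly about the correct treatment of $k$ relative to the remaining coordinates, would have a profitable manipulation, contradicting strategy-proofness.

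The main obstacle is exactly this decomposition step, because separability constrains but does not fully determine an agent's ranking across objects, so the manipulation-based argument must carefully select witnessing separable preferences (for instance ones in which a single object dominates every pairwise comparison) in order to isolate the effect on object $k$ while keeping all reports inside $\mathcal{S}$. Once independence, monotonicity and properness of each $\mathcal{W}_k$ are in hand, the equivalence $k\in f(P)\Leftrightarrow\{i:k\in t(P_i)\}\in\mathcal{W}_k$ is precisely the definition of voting by committees, completing the proof.
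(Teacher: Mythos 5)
You should first note a peculiarity of this comparison: the statement is a \emph{Fact} that the paper imports from \citep{barbera1991voting} without proof, so there is no in-paper argument to measure your proposal against; the relevant benchmark is the original Barber\`a--Sonnenschein--Zhou proof, and your outline follows its architecture. Your sufficiency direction is complete and correct: the coalition of supporters of a good object $k\in t(P_i)$ can only shrink under a misreport and that of a bad object can only grow, so by monotonicity of each $\mathcal{W}_k$ the truthful outcome $T$ and the deviation outcome $T'$ satisfy $T\setminus T'\subseteq t(P_i)$ and $(T'\setminus T)\cap t(P_i)=\emptyset$, and the one-object-at-a-time transformation of $T'$ into $T$ then yields $T\,R_i\,T'$ by separability. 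This is exactly the classical argument.

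The necessity half is the right strategy but remains a sketch at its two hardest points, and one of them contains a structural slip. First, tops-onlyness of strategy-proof onto rules on $\mathcal{S}^n$ is itself a substantive lemma (it occupies real work in the original paper, via option-set arguments), and you dispatch it in one sentence. Second, with your \emph{existential} definition of the committee ($M\in\mathcal{W}_k$ iff $k\in f(P)$ for \emph{some} $P$ whose supporting coalition for $k$ is $M$), monotonicity cannot be checked ``directly from strategy-proofness'' \emph{before} the independence claim: until you know the fate of $k$ depends only on the supporting coalition, membership of $M$ in $\mathcal{W}_k$ tells you nothing about other profiles with coalition $M'\supseteq M$. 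The clean order is independence first (your one-agent-at-a-time argument with separable preferences in which $k$ overwhelms every other comparison --- such preferences exist, e.g.\ via additive representations with a dominant weight on $k$, so the witnesses do stay in $\mathcal{S}$), after which $\mathcal{W}_k$ is well defined coalition-wise and monotonicity follows from a lexicographic-manipulation argument; properness ($\emptyset\notin\mathcal{W}_k$ and $\mathcal{W}_k\neq\emptyset$) indeed follows from ontoness as you assert. So: correct route, complete on the easy half, faithful but genuinely incomplete on the hard half.
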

Observe that, since voting by committees are  tops-only,  tops-onlyness is implied by strategy-proofness on the domain of separable preferences.

It is clear that dictatorial voting by committees are NOM. In these rules, the dictator  $i\in N$ is such that $i\in M$ for all $M\in \mathcal{W}_{k }$  and $\{i\}\in \mathcal{W}_{k }$ for all $k \in K$. Next, we give a simple test to identify which non-dictatorial voting by committees are NOM.

\begin{theorem}\label{teo committees}
A non-dictatorial voting by committees $f^{%
\mathcal{W}}:\mathcal{P}%
^{n}\longrightarrow 2^K$ is NOM if and only if for each  $k \in K$:  
\begin{enumerate}[(i)]
    \item  $\bigcap_{M \in \mathcal{W}_k} M=\emptyset$, and  
    \item $|M| \geq 2$  for each $M \in \mathcal{W}_k$. 
\end{enumerate}
\end{theorem}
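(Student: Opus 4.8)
The plan is to apply Theorem~\ref{Main Theorem}, which reduces the question of NOM to the equality $SV_i=V_i$ for every agent $i$. Since voting by committees operate object-by-object and are tops-only, I would first establish a characterization of the veto sets $V_i$ and the strong-veto sets $SV_i$ in terms of the committees $\{\mathcal{W}_k\}_{k\in K}$. The key observation is that, because the rule is decomposable object-by-object, vetoing an alternative $S\in 2^K$ amounts to controlling, for at least one object $k$, whether $k$ is forced into or out of the chosen set regardless of the other agents' reports.

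\textbf{Translating vetoes into committee conditions.} First I would prove the ``object-level'' facts: for a fixed object $k$, agent $i$ can force $k\notin f(P)$ for all subprofiles (by reporting a top not containing $k$) if and only if $i$ belongs to every coalition in $\mathcal{W}_k$, i.e. $i\in\bigcap_{M\in\mathcal{W}_k}M$; symmetrically, $i$ can force $k\in f(P)$ for all subprofiles (by reporting a top containing $k$) if and only if $\{i\}\in\mathcal{W}_k$. Using these, I would characterize $V_i$: an alternative $S$ lies in $V_i$ precisely when there is some object $k$ on which $i$ has one-sided control that conflicts with $S$'s membership of $k$. Then I would identify when such a veto is \emph{strong}, that is, achievable with \emph{any} preference whose top differs from $S$. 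The heart of the argument is that a veto of $S$ arising from object-control is strong if and only if the control does not hinge on $i$'s full top profile but only on a single bit (whether $k$ is or isn't in the reported top). This is exactly where conditions (i) $\bigcap_{M\in\mathcal{W}_k}M=\emptyset$ and (ii) $|M|\geq 2$ for all $M\in\mathcal{W}_k$ will enter.

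\textbf{The two directions.} For the ``if'' direction, assume (i) and (ii) hold for every $k$. Condition (i) says no agent sits in all coalitions of any committee, so no agent can unilaterally force any object \emph{out}; condition (ii) says no singleton is winning, so no agent can unilaterally force any object \emph{in}. I would show this forces every veto to be strong (indeed, combined with non-dictatorship, it sharply limits the structure of $V_i$), so $SV_i=V_i$ and Theorem~\ref{Main Theorem} gives NOM. For the ``only if'' direction, I would argue the contrapositive: if (i) fails for some $k$, some agent $i$ lies in every coalition of $\mathcal{W}_k$, and I would exhibit a preference $P_i$ with top containing $k$ by which $i$ still vetoes some alternative whose top-incompatibility with that veto is not forced---producing a veto via $P_i$ that is \emph{not} captured by the strong-veto set, hence $SV_i\subsetneq V_i$, so by Theorem~\ref{Main Theorem} the rule is not NOM. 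The failure of (ii) is handled symmetrically, using a singleton winning coalition $\{i\}$ to let $i$ force an object in even while reporting a top that excludes it.

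\textbf{The main obstacle} I expect is the bookkeeping in the strong-veto direction: because an alternative $S\in 2^K$ encodes a full vector of object-membership bits, I must be careful that the ``any preference with top $\neq S$'' quantifier in the definition of strong veto is matched against control that depends only on individual object-bits, not on the whole top. In particular, when checking whether a veto is strong, I need to confirm that for \emph{every} admissible top (other than $S$ itself) the relevant object-bit takes the value that sustains the veto---this requires handling tops that agree with $S$ on some objects but differ on others, and showing that non-dictatoriality together with (i)--(ii) leaves no ``partial-control'' loophole. Once the object-level control lemmas are in place, the rest reduces to routine case analysis, so I would invest the bulk of the effort in stating and proving those lemmas cleanly.
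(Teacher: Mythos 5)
Your overall architecture matches the paper's: your object-level control facts ($i$ can force $k$ out of the outcome if and only if $i\in\bigcap_{M\in\mathcal{W}_k}M$, and can force $k$ in if and only if $\{i\}\in\mathcal{W}_k$) are exactly the content of the paper's Lemma \ref{lem committees} characterizing $V_i$, and your ``if'' direction is sound --- in fact simpler than you anticipate: under (i) and (ii) the characterization gives $V_i=\emptyset$ outright, so $SV_i=V_i$ holds vacuously and the ``strong-veto bookkeeping'' you flag as the main obstacle never arises in that direction. Non-dictatorship is also not needed there, contrary to your parenthetical.

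The genuine gap is in your ``only if'' direction: the contrapositive as you state it --- failure of (i) at some $k$ yields $SV_i\subsetneq V_i$ --- is false without non-dictatorship, and your sketch never invokes it. A dictator $i$ has $i\in\bigcap_{M\in\mathcal{W}_k}M$ and $\{i\}\in\mathcal{W}_k$ for \emph{every} $k$, so (i) and (ii) fail everywhere, yet $\mathcal{V}_i^S=\{P_i:t(P_i)\neq S\}$ for every $S$, hence $SV_i=V_i=2^K$ and the rule is NOM. Concretely, your planned construction breaks down there: if $i\in\bigcap_{M\in\mathcal{W}_{k^\star}}M$, then $\{k^\star\}\in V_i$, and to show $\{k^\star\}\notin SV_i$ you must exhibit a top $T\neq\{k^\star\}$ with $\{k^\star\}\in O^{\mathcal{W}}(P_i)$ (note your sketch describes this witness backwards --- you need a preference via which $i$ does \emph{not} veto the alternative). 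Any $T$ omitting $k^\star$ still vetoes $\{k^\star\}$, since $i$'s absence forces $k^\star$ out; so $T$ must contain $k^\star$ together with some other object $k$, and then $\{k^\star\}$ is reachable only if $\{i\}\notin\mathcal{W}_k$ --- i.e., only if $i$'s power fails at some other object, which is exactly what non-dictatorship supplies. The paper makes this the organizing idea of the direction: assuming NOM and any single veto $S\in V_i$, it uses witness tops such as $\{k,k^\star\}$ and $\{k^\star\}$ (resp.\ $\{k\}$ and $\emptyset$ in the singleton-committee case) to show that $\{i\}\in\mathcal{W}_k$ and $i\in\bigcap_{M\in\mathcal{W}_k}M$ must hold for all $k$, i.e., $i$ is a dictator; non-dictatorship then forces $V_i=\emptyset$ for all $i$, which by the veto characterization is precisely (i) and (ii). Your plan becomes correct once you insert this dictatorship dichotomy into the contrapositive.
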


In order to prove Theorem \ref{teo committees}, the following remark and  lemma are useful.

\begin{remark}\label{rem committees}
Let $i \in N$. If  $S \in 2^K$ is such that
\begin{enumerate}[(i)]
    \item for each $k \in S$, $i \notin \bigcap_{M \in \mathcal{W}_k
}M$ and
    \item for each $k \notin S$, $\{i\}\notin 
\mathcal{W}_{k}$,
\end{enumerate}
then 
\begin{equation}\label{eq rem}
S\in O^\mathcal{W}(P_{i}) \text{ \ for each \ }P_i \in \mathcal{P}.    
\end{equation}
To see this, let $P_i \in \mathcal{P}$ and let $P_{-i}\in \mathcal{P}^{n-1}$ be such that $t(P_{j})=S$ for each $j\in N\setminus \{i\}$. Observe that condition (i) for $S$ implies  that  $N \setminus \{i\} \in \mathcal{W}_k$  for each $k \in S$ and, thus,  $S \subseteq f^\mathcal{W}(P_i, P_{-i}).$ Moreover, $k \notin S$ implies, by condition (ii) for $S$, that  $k \notin f^\mathcal{W}(P_i, P_{-i})$. Therefore, $f^{\mathcal{W}}(P_{i},P_{-i})=S$  and \eqref{eq rem} holds.  
\end{remark}

\begin{lemma}\label{lem committees}
Let $f^{\mathcal{W}}:\mathcal{P}^{n}\longrightarrow 2^{K}$ be a voting by committees, and let $i \in N$. Then, $S \in V_{i}$ if and only if there is $k ^{\star }\in K$ such that either:
\begin{enumerate}[(i)]
    \item $%
k ^{\star }\in S$ and $i\in \cap_{M \in \mathcal{W}_{k^\star}} M$, or

\item $k ^{\star}\notin S$ and $\{i\}\in \mathcal{W}_{k ^{\star }}$%
.\footnote{%
In the context of voting by committees, when $i\in M$ for each $M\in \mathcal{W}%
_{k ^{\star }}$, it is usual to say that agent $i$ is a \textit{vetoer} of $%
k ^{\star }$ in the sense that alternative $k^\star$ must be in the top of agents $i$'s preference to be included in the outcome of the rule. Such  veto notion is different from the veto condition in the present paper. Our notion is used to describe when an agent vetoes a subset $S \subseteq 2^K$ in the sense that there is a preference $P_i$ for agent $i$ such that $S$ is never chosen when $i$ declares $P_i$.
}
\end{enumerate}

\end{lemma}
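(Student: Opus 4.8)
The plan is to prove the equivalence by handling the sufficiency direction ($\Longleftarrow$) through an explicit construction of a vetoing preference, and the necessity direction ($\Longrightarrow$) by contraposition, where it reduces immediately to Remark \ref{rem committees}. The key observation driving the whole argument is that, since the outcome of $f^{\mathcal{W}}$ is decided object-by-object, a single ``wrong'' object suffices to force the output away from $S$ regardless of the other agents' reports.

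For the ($\Longleftarrow$) direction, I would assume there is $k^{\star}\in K$ satisfying (i) or (ii) and exhibit a preference $P_i$ with $S\notin O^{\mathcal{W}}(P_i)$. In case (i), where $k^{\star}\in S$ and $i\in\bigcap_{M\in\mathcal{W}_{k^{\star}}}M$, I would pick any $P_i$ with $k^{\star}\notin t(P_i)$. Then for every $P_{-i}$ the coalition $\{j\in N : k^{\star}\in t(P_j)\}$ omits $i$ and so cannot lie in $\mathcal{W}_{k^{\star}}$ (every winning coalition contains $i$), whence $k^{\star}\notin f^{\mathcal{W}}(P_i,P_{-i})$; as $k^{\star}\in S$ this gives $f^{\mathcal{W}}(P_i,P_{-i})\neq S$. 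In case (ii), where $k^{\star}\notin S$ and $\{i\}\in\mathcal{W}_{k^{\star}}$, I would symmetrically pick any $P_i$ with $k^{\star}\in t(P_i)$; then the coalition voting for $k^{\star}$ contains $\{i\}$ and, by monotonicity of $\mathcal{W}_{k^{\star}}$, belongs to $\mathcal{W}_{k^{\star}}$, forcing $k^{\star}\in f^{\mathcal{W}}(P_i,P_{-i})$ and hence $f^{\mathcal{W}}(P_i,P_{-i})\neq S$ for every $P_{-i}$. In both cases $S\notin O^{\mathcal{W}}(P_i)$, so $S\in V_i$.

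For the ($\Longrightarrow$) direction, I would argue by contraposition. Suppose no $k^{\star}$ satisfies (i) or (ii). Negating the disjunction term by term over all objects yields exactly that for each $k\in S$ we have $i\notin\bigcap_{M\in\mathcal{W}_k}M$, and for each $k\notin S$ we have $\{i\}\notin\mathcal{W}_k$. These are precisely hypotheses (i) and (ii) of Remark \ref{rem committees} applied to the set $S$, so the Remark yields $S\in O^{\mathcal{W}}(P_i)$ for every $P_i\in\mathcal{P}$, i.e.\ $S\notin V_i$, which is what contraposition requires.

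The argument is essentially routine once the correspondence with Remark \ref{rem committees} is spotted. The one step needing care is the logical bookkeeping in the contrapositive: I must make sure the negation of ``$\exists k^{\star}$ with (i) or (ii)'' is correctly unpacked as the conjunction, over all $k$, of the two Remark conditions, matching clause (i) against the ``$k\in S$'' case and clause (ii) against the ``$k\notin S$'' case. Beyond that, the only thing to double-check is that disagreeing with $S$ on the single object $k^{\star}$ genuinely forces $f^{\mathcal{W}}(P_i,P_{-i})\neq S$, which is immediate from the object-by-object definition of voting by committees.
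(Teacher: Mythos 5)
Your proof is correct and follows essentially the same route as the paper's: the sufficiency direction uses the identical object-by-object constructions (choosing $P_i$ with $k^{\star}\notin t(P_i)$ in case (i) and $k^{\star}\in t(P_i)$ in case (ii)), and the necessity direction is exactly the paper's appeal to Remark \ref{rem committees}, which you simply phrase as an explicit contraposition. Your added care in unpacking the negation and invoking monotonicity of $\mathcal{W}_{k^{\star}}$ is a correct elaboration of steps the paper leaves implicit.
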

\begin{proof} Let $f^{\mathcal{W}}:\mathcal{P}^{n}\longrightarrow 2^{K}$ be a voting by committees and let $i \in N$.

\noindent ($\Longrightarrow$) 
Assume that $S \in V_{i}.$ Then, there is $%
P_{i}\in \mathcal{P}$ such that $S \notin O^\mathcal{W}(P_{i}).$ Thus, by Remark \ref{rem committees}, there is $k ^{\star}\in K$ such that either $k ^{\star }\in S$ and $i\in \cap_{M \in \mathcal{W}_{k^\star}} M$, or $k ^{\star }\notin S$
and $\{i\}\in \mathcal{W}_{k ^{\ast }}$.

\noindent ($\Longleftarrow$) There are two cases to consider:
\begin{enumerate}
\item[$\boldsymbol{1}.$]\textbf{ There is $\boldsymbol{k ^{\star }\in K$  such that $k^{\star }\in S$ and $i\in \cap_{M \in \mathcal{W}_{k^\star}} M}$}. Let $P_{i}\in \mathcal{P}$ be such that $k ^{\star} \notin t (P_{i}).$   Then, $k ^{\star }\notin f(P_{i},P_{-i})$ and, therefore, $%
f(P_{i},P_{-i})\neq S$ for each $P_{-i}\in \mathcal{P}^{n-1}.$ Thus, agent $i$
vetoes $S$ with $P_{i}.$ Hence, $S \in V_{i}.$ 

\item[$\boldsymbol{2}.$]\textbf{There is $\boldsymbol{k^{\star}\in K$ such that $k
^{\star}\notin S$ and $\{i\}\in \mathcal{W}_{k ^{\star}}}$}. Let $%
P_{i}\in \mathcal{P}$ be  such that $k ^{\star} \in t (P_{i}).$ Then, $%
k ^{\star}\in f(P_{i},P_{-i})$ and, therefore,  $f(P_{i},P_{-i})\neq S$
for each $P_{-i}\in \mathcal{P}^{n-1}.$ Thus, agent $i$ vetoes $S$ with $P_{i}.$ Hence, $S \in V_{i}.$
\end{enumerate}
\end{proof}

\medskip 

\noindent \emph{Proof of Theorem \ref{teo committees}}. Let $f^\mathcal{W}:\mathcal{P}^n \longrightarrow 2^K$ be a non-dictatorial voting by committees. 

\noindent ($\Longleftarrow$) By Lemma \ref{lem committees}, (i) and  (ii) in Theorem \ref{teo committees} imply that $%
V_{i}=\emptyset$ for each $i\in N.$ Then, $SV_i=V_i$ for each $i \in N$ and, by Theorem \ref{Main Theorem}, $f^{%
\mathcal{W}}$ is NOM.

\noindent ($\Longrightarrow$) Let $f^{%
\mathcal{W}}$ be  NOM. Then, by Theorem \ref{Main Theorem}, $SV_i=V_i$ for each $i \in N$. The next claim states that $V_{i}=\emptyset$ for each $i\in N.$

\noindent \textbf{Claim: $\boldsymbol{V_{i}= \emptyset}$ for each $\boldsymbol{i \in N}.$} Assume, by contradiction, that there are $i \in N$ and $S \in X$ such that $S \in V_i$. We will show that such $i$ is a dictator, i.e., for each $k \in K$, 
\begin{equation}\label{dictator1}
\{i\}\in \mathcal{W}_{k },
\end{equation} and 
\begin{equation}\label{dictator2}
i \in \bigcap_{M \in \mathcal{W}_k} M. 
\end{equation}
By Lemma \ref{lem committees}, there are  two cases to consider:

\begin{enumerate}
\item[$\boldsymbol{1}$.]  \textbf{There is $\boldsymbol{k ^{\star }\in S$ and $i\in \bigcap_{M \in \mathcal{W}_{k^\star}} M}$}.  By Lemma \ref{lem committees}, $\{k ^{\star
}\}\in V_{i}.$ Assume \eqref{dictator1} does not hold. Then, there is $k \in K$ such that $\{i\} \notin \mathcal{W}_k.$ Let $P_{i} \in \mathcal{P}$ be such that $t (P_{i})=\{k ,k
^{\star}\}$. Then, $\{k ^{\star}\}\in O^\mathcal{W}(P_{i})$ and, therefore, $\{k ^{\star}\} \notin SV_i$. This contradicts $SV_i=V_i,$ so \eqref{dictator1} holds. Now, assume that \eqref{dictator2} does not hold. Then, there is $k \in K$ and $M\in \mathcal{W}_{k }$
such that $i\notin M.$ By Lemma \ref{lem committees},  $\{k ^{\star },k \}\in
V_{i}.$ Let $P_{i} \in \mathcal{P}$ be such that $t (P_{i})=\{k ^{\star }\}$. Then, $%
\{k,k^\star\}\in O^{\mathcal{W}}(P_{i})$. Thus,  $\{k, k^\star\} \notin SV_{i}.$ This contradicts $SV_i=V_i,$ so \eqref{dictator2} holds. 
Since both \eqref{dictator1} and \eqref{dictator2} hold, $i$ is a dictator.

\item[$\boldsymbol{2}$.] \textbf{There is $\boldsymbol{k ^{\star }\notin S$ and $\{i\}\in 
\mathcal{W}_{k ^{\star }}}$}.  By Lemma \ref{lem committees}, $\emptyset \in V_{i}.$  Assume \eqref{dictator1} does not hold. Then, there is $k \in K$ such that $\{i\} \notin \mathcal{W}_k.$  
 Let $P_{i} \in \mathcal{P}$ be such that $t (P_{i})=\{k\}$. Then, $\emptyset \in O^{\mathcal{W}}(P_{i})$ and, therefore, $\emptyset \notin SV_i$. This contradicts $SV_i=V_i,$ so \eqref{dictator1} holds.  Now, assume that \eqref{dictator2} does not hold. Then, there is $k \in K$ and $M\in \mathcal{W}_{k }$
such that $i\notin M.$ By Lemma \ref{lem committees},  $\{k \}\in
V_{i}.$ Let $P_{i} \in \mathcal{P}$ be such that $t (P_{i})=\emptyset$. Then, $%
\{k \}\in O^{\mathcal{W}}(P_{i})$. Thus,  $\{k \} \notin SV_{i}.$ This contradicts $SV_i=V_i,$ so \eqref{dictator2} holds. 
Since both \eqref{dictator1} and \eqref{dictator2} hold, $i$ is a dictator. 

\end{enumerate}
The fact that $i$ is a dictator contradicts  that $f^\mathcal{W}$ is non-dictatorial. Therefore, $V_i=\emptyset$ for each $i \in N.$ This finishes the proof of the Claim.

In order to complete the proof of the theorem, observe that the Claim and  Lemma \ref{lem committees} imply (i) and (ii) in Theorem \ref{teo committees}. \hfill $\square$

\bigskip

If we add anonymity to Fact \ref{prop voting committees} the class of voting by committees  must be
reduced to a relevant subclass of rules  which are called voting by quota. A voting by
committees is a \textit{voting by quota }if, for each $k \in K$, there is $q_{k}$, with $1 \leq q_k \leq n$, such that the associated committee $\mathcal{W}_k$ satisfies that
\begin{equation*}
M\in \mathcal{W}_{k }\text{ if and only if }\left\vert M\right\vert
\geq q_{k}.
\end{equation*}
Given  $q=\{q_k\}_{k \in K}$, let $f^{q}$ be its associated voting by quota. Next, we give a simple test to identify which voting by quota are NOM.

\begin{corollary}\label{cor quota}
A voting by quota $f^{q}:\mathcal{P}^{n}\longrightarrow 2^{K}$ is NOM if and only if $2 \leq q_k \leq n-1$ for each $k \in K$. 
\end{corollary}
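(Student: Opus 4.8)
The plan is to derive this directly from Theorem \ref{teo committees} by rewriting its two committee conditions as conditions on the quotas. A voting by quota is by definition a voting by committees with $\mathcal{W}_k=\{M\subseteq N:|M|\geq q_k\}$, so the first thing I would check is that the theorem is applicable, i.e. that every voting by quota $f^q$ (with $n\geq 2$) is non-dictatorial. Recall that a dictator $i$ must satisfy $\{i\}\in\mathcal{W}_k$ and $i\in M$ for every $M\in\mathcal{W}_k$, for all $k\in K$. The first requirement forces $q_k=1$, but then every singleton coalition is winning; in particular $\{j\}\in\mathcal{W}_k$ for some $j\neq i$ (which exists since $n\geq 2$), and this winning coalition does not contain $i$, contradicting the second requirement. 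Hence $f^q$ is non-dictatorial and Theorem \ref{teo committees} can be invoked.

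It then remains to show, for each $k\in K$, that condition (ii) of Theorem \ref{teo committees} (namely $|M|\geq 2$ for every $M\in\mathcal{W}_k$) is equivalent to $q_k\geq 2$, and that condition (i) (namely $\bigcap_{M\in\mathcal{W}_k}M=\emptyset$) is equivalent to $q_k\leq n-1$. For (ii), since $\mathcal{W}_k$ consists exactly of the coalitions of size at least $q_k$, its smallest members have size precisely $q_k$; thus every winning coalition has size at least $2$ if and only if $q_k\geq 2$, because if $q_k=1$ the singletons are winning and violate the condition.

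For (i), I would argue as follows. If $q_k\leq n-1$, then $|N\setminus\{j\}|=n-1\geq q_k$, so $N\setminus\{j\}\in\mathcal{W}_k$ for every $j\in N$; hence $\bigcap_{M\in\mathcal{W}_k}M\subseteq\bigcap_{j\in N}(N\setminus\{j\})=\emptyset$. Conversely, if $q_k=n$ then the only coalition of size at least $q_k$ is $N$ itself, so $\mathcal{W}_k=\{N\}$ and $\bigcap_{M\in\mathcal{W}_k}M=N\neq\emptyset$. Combining the two equivalences, conditions (i) and (ii) of Theorem \ref{teo committees} hold simultaneously for every $k$ if and only if $2\leq q_k\leq n-1$ for each $k\in K$, which is exactly the claimed characterization of NOM.

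The computations are all elementary, so there is no serious obstacle; the only points that require a moment of care are checking non-dictatoriality so that Theorem \ref{teo committees} actually applies, and identifying the relevant extremal winning coalitions, namely the singletons (controlling condition (ii)) and the complements $N\setminus\{j\}$ of singletons (controlling condition (i)).
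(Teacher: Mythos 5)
Your proof is correct and takes essentially the same route as the paper, whose proof is the one-line remark that the result ``follows easily by specifying Conditions (i) and (ii) in Theorem \ref{teo committees} to voting by quota''; you simply spell out the translation $|M|\geq 2 \Leftrightarrow q_k\geq 2$ and $\bigcap_{M\in\mathcal{W}_k}M=\emptyset \Leftrightarrow q_k\leq n-1$ via the extremal coalitions. Your explicit check that every voting by quota with $n\geq 2$ is non-dictatorial (so the theorem actually applies) is a detail the paper leaves implicit, and it is a welcome addition rather than a deviation.
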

\begin{proof}
It follows easily  by specifying Conditions (i) and (ii) in Theorem \ref{teo committees} to voting by quota.
\end{proof}

Corollary \ref{cor quota} is rather surprising. In general, rules in which quotas are either   $1$ or $n$ are the most robust to manipulation within all voting by quota from several standpoints \citep[for example, see][]{arribillaga2017comparing,fioravanti2022false}. Our result, in contrast, includes them within obviously manipulable ones.

\section{Final Remarks}\label{section final}

Table \ref{tabla caracterizaciones} summarizes our main findings about tops-only, median voter (MV), generalized median voter (GMV), voting by committees (VbC) and voting by quota (VbQ) rules.

\begin{table}[h] 
\small
\centering 
\begin{threeparttable}
\begin{tabular}{|c |l |c|}
\hline
Tops-only &  \hspace{7 pt}
$f$ NOM  \ \ $\Longleftrightarrow$ \ \ $\forall i \in N:$ \   $SV_i=V_i$   & Th. \ref{Main Theorem} \\
 \hline \hline
MV & \hspace{2 pt} 
$f^\alpha$ NOM \ \ $\Longleftrightarrow$ \ \ $\alpha_1 \in \{a, a+1\}$ and $\alpha_{n-1}\in \{b-1,b\}$ & Th. \ref{teorema median}  \\
\hline \hline
GMV$^\dag$ &  \hspace{2 pt} $f^p$ NOM \ \ $\Longleftrightarrow$ \ \ $\forall i \in N:$ \ $p_{N \setminus\{i\}} \in \{a, a+1\}$ and $p_{\{i\}}\in \{b-1,b\}$ & Th.  \ref{teo general median}\\
\hline \hline
VbC$^\dag$ &  $f^\mathcal{W}$ NOM \ \ $\Longleftrightarrow$ \ \ $\forall k \in K:$ \   $\bigcap_{M \in \mathcal{W}_k} M=\emptyset$  and  $|M| \geq 2$ $\forall M \in \mathcal{W}_k$ & Th. \ref{teo committees}\\
\hline \hline
VbQ & \hspace{3 pt} $f^q$ NOM \ \ $\Longleftrightarrow$ \ \ $\forall k \in K:$ $2 \leq q_k \leq n-1$& Cor. \ref{cor quota}\\
\hline
\end{tabular}
\begin{tablenotes}\footnotesize
\item[$\dag$]The characterization applies to non-dictatorial rules. 
\end{tablenotes}
\end{threeparttable}
\caption{\emph{Summary of  results.}} \label{tabla caracterizaciones}
\end{table}

Three final remarks are in order. First, following the proof of Theorem \ref{Main Theorem}, it can be seen that the condition $SV_{i}=V_{i}$ for each $i\in N$ implies NOM even when tops-onlyness is removed. Although clearly it is not a necessary condition of NOM \citep[see the proof of Lemma 5 in][]{aziz2021obvious}.

Second, a manipulation is called \textit{worst-case obvious} if Condition \eqref{cond worst} in Definition \ref{def OM} holds and \textit{best-case obvious} if Condition \eqref{cond best} in Definition \ref{def OM} holds. In the proof of Theorem \ref{Main Theorem}, we show that  $SV_i\neq V_i$ implies that the rule has a worst-case obvious manipulation. Therefore, by Theorem \ref{Main Theorem}, if a rule does not have a worst-case obvious manipulation then it does not have a best-case obvious manipulation either. Therefore, for top-only rules, a manipulation is obvious  if and only if  it is a worst-case obvious manipulation.

Finally, we present some observations for the case in which $X$ is infinite. If adequate assumptions over the set of preferences are done in order that \eqref{cond worst} and \eqref{cond best} in Definition \ref{def OM} are well-defined, Theorem \ref{Main Theorem} is also valid in such context. For example, let  $X=[a,b]\subseteq \mathbb{R}$ and let  $\mathcal{U}$ be the set of all continuous preferences on $[a,b]$ with a unique top (indifferences between non-top alternatives are admitted). If $f^\alpha:\mathcal{U}^n \longrightarrow X$ is a median voter scheme and $P_i \in \mathcal{U}$, then the option set is given by:
$$O^\alpha(P_{i})=
\begin{cases}
    [t(P_i),\alpha_{n-1}] & \text{if } t(P_i)<\alpha_1\\
    [\alpha_1,\alpha_{n-1}] & \text{if } t(P_i)\in[\alpha_1,\alpha_1]\\
    [\alpha_1,t(P_i)] & \text{if } t(P_i)>\alpha_{n-1}\\
\end{cases}
$$
and, therefore, $O^\alpha(P_{i})$ is a closed interval. Hence, \eqref{cond worst} and \eqref{cond best} in Definition \ref{def OM} are well-defined. For these rules,  if  $a < \alpha_1$,  for any $x \in X$ such that  $a<x<\alpha_1$  an  argument similar to the one used in the proof of Lemma \ref{lema median voter} shows that $x \in V_i$; and an argument similar to the one used in the proof of  Theorem \ref{teorema median} shows that $x \notin SV_i$. The same is true if $\alpha_{n-1}<b$ for any $x \in X$ such that $\alpha_{n-1}<x <b.$   Therefore, we have the following simple characterization of median voter schemes when $X=[a,b]\subseteq \mathbb{R}$ . 
\begin{theorem}
\;
\begin{enumerate}[(i)]
\item A median voter
scheme $f^{\alpha}:\mathcal{U}^{n}\longrightarrow \lbrack a,b]$   is NOM if and only if $\alpha_{1}=a$ and $\alpha_{n-1}=b$.

\item A non-dictatorial generalized median voter scheme $f^{p}:\mathcal{U}^{n}\longrightarrow \lbrack a,b]$ is NOM if and only if $p_{N\backslash \{i\}}=a$ and $p_{\{i\}}=b$ for each $i\in N.$
\end{enumerate}

\end{theorem}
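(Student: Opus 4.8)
The plan is to derive both parts directly from Theorem \ref{Main Theorem}, reusing the option-set description and the veto computations already recorded in the paragraph immediately preceding the statement. The continuous setting changes almost nothing in the underlying logic; the only genuinely new phenomenon is that the one-step boundary slack ($\alpha_1\in\{a,a+1\}$ and $\alpha_{n-1}\in\{b-1,b\}$) present in Theorems \ref{teorema median} and \ref{teo general median} collapses to the endpoints themselves, because between $a$ and any $\alpha_1>a$ there now sits a whole interval of alternatives rather than a single successor.

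For part (i), I would argue as follows. For the ``only if'' direction, assume $f^\alpha$ is NOM, so by Theorem \ref{Main Theorem} we have $SV_i=V_i$ for every $i$. If $\alpha_1>a$, the discussion preceding the theorem produces, for any $x$ with $a<x<\alpha_1$, both $x\in V_i$ and $x\notin SV_i$, contradicting $SV_i=V_i$; hence $\alpha_1=a$, and the symmetric argument on the right end gives $\alpha_{n-1}=b$. For the ``if'' direction, set $\alpha_1=a$ and $\alpha_{n-1}=b$. From the continuous option-set formula one reads off, exactly as in Lemma \ref{lema median voter}, that $V_i=\{x\in[a,b]:x<\alpha_1 \text{ or } x>\alpha_{n-1}\}$, which is empty when $\alpha_1=a$ and $\alpha_{n-1}=b$. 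Thus $SV_i=V_i=\emptyset$ for every $i$, and Theorem \ref{Main Theorem} gives NOM.

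For part (ii), I would first reduce to the monotone situation. If some agent $i^\star$ had $p_{\{i^\star\}}<p_{N\setminus\{i^\star\}}$, then the continuous analog of Lemma \ref{lem gen median}(ii) yields $V_{i^\star}=X$; NOM and Theorem \ref{Main Theorem} then force $SV_{i^\star}=X$, i.e. $O^p(P_{i^\star})\subseteq\{t(P_{i^\star})\}$ for every report, which, the option set being nonempty, means $O^p(P_{i^\star})=\{t(P_{i^\star})\}$ and makes $i^\star$ a dictator, contradicting non-dictatoriality. Hence $p_{N\setminus\{i\}}\le p_{\{i\}}$ for all $i$, and I can repeat the part-(i) argument verbatim with $p_{N\setminus\{i\}}$ and $p_{\{i\}}$ playing the roles of $\alpha_1$ and $\alpha_{n-1}$, invoking Lemma \ref{lem gen median}(i) in place of Lemma \ref{lema median voter}; this yields $p_{N\setminus\{i\}}=a$ and $p_{\{i\}}=b$ for each $i$. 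The converse is identical to that of part (i): these equalities give $V_i=\emptyset$, hence NOM.

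The one point requiring care, and the place where the continuous model genuinely departs from the discrete one, is the step ``$\alpha_1>a$ implies some $x\in V_i\setminus SV_i$''. In the discrete proof, $\alpha_1=a+1$ is admissible precisely because the unique alternative below $\alpha_1$ is the endpoint $a$, which is strongly vetoed. On $[a,b]$ there is no ``next'' alternative: for any $x\in(a,\alpha_1)$ the report $P_i$ with $t(P_i)=a\neq x$ has option set $[a,\alpha_{n-1}]\ni x$, so $x$ fails to be strongly vetoed while still being vetoed via a top at $\alpha_1$. This is exactly what the preceding paragraph records, so the remaining work is bookkeeping rather than a new argument; the only thing to check explicitly is that the dictator deduction from $SV_{i^\star}=X$ survives unchanged in the infinite setting, which it does since it uses only nonemptiness of the option set.
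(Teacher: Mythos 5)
Your proposal is correct and follows essentially the same route as the paper, which proves this theorem only by the sketch in the preceding paragraph: apply Theorem \ref{Main Theorem} via the continuous analogs of Lemma \ref{lema median voter} and Lemma \ref{lem gen median}, observe that any $x$ strictly between $a$ and $\alpha_1$ (resp.\ $p_{N\setminus\{i\}}$) lies in $V_i\setminus SV_i$, and reduce the case $p_{\{i^\star\}}<p_{N\setminus\{i^\star\}}$ to dictatorship exactly as in the proof of Theorem \ref{teo general median}. Your explicit remark on why the discrete one-step slack $\{a,a+1\}$ collapses to $\{a\}$ in the continuum is precisely the point the paper's sketch relies on.
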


\bibliographystyle{ecta}
\bibliography{biblio-obvious}

\end{document}